\title{A Short Introduction to Program Algebra \\ with Instructions for
       Boolean Registers}
\author{J.A. Bergstra \and C.A. Middelburg}
\institute{Informatics Institute, Faculty of Science,
           University of Amsterdam, \\
           Science Park~904, 1098~XH Amsterdam, the Netherlands \\
           \email{J.A.Bergstra@uva.nl,C.A.Middelburg@uva.nl}}
\begin{document}

\maketitle

\begin{abstract}
A parameterized algebraic theory of instruction sequences, objects that 
represent the behaviours produced by instruction sequences under 
execution, and objects that represent the behaviours exhibited by the
components of the execution environment of instruction sequences is the 
basis of a line of research in which issues relating to a wide variety 
of subjects from computer science have been rigorously investigated 
thinking in terms of instruction sequences.
In various papers that belong to this line of research, use is made of 
an instantiation of this theory in which the basic instructions are
instructions to read out and alter the content of Boolean registers and 
the components of the execution environment are Boolean registers.
In this paper, we give a simplified presentation of the most general 
such instantiated theory. 
\begin{keywords}
program algebra, thread algebra, thread-service interaction, 
Boolean register. 
\end{keywords}%
\begin{classcode}
F.1.1, F.3.2, F.4.1.
\end{classcode}
\end{abstract}

\section{Introduction}
\label{sect-intro}

We are carrying out a line of research in which issues relating to a 
wide variety of subjects from computer science are rigorously 
investigated thinking in terms of instruction sequences (see
e.g.~\cite{SiteIS}).
The groundwork for this line of research is the combination of an 
algebraic theory of single-pass instruction sequences, called program 
algebra, and an algebraic theory of mathematical objects that represent 
the behaviours produced by instruction sequences under execution, called 
basic thread algebra, extended to deal with the interaction between 
instruction sequences under execution and components of their execution 
environment (see~e.g.~\cite{BM12b}).
This groundwork is parameterized by a set of basic instructions and a 
set of objects that represent the behaviours exhibited by the components 
of the execution environment.

In various papers that have resulted from this line of research, use is 
made of an instantiation of this theory in which certain instructions to 
read out and alter the content of Boolean registers are taken as basic 
instructions and Boolean registers are taken as the components of the 
execution environment 
\mbox{(see \cite{BM13b,BM13a,BM14a,BM14e,BM13c,BM18a})}.
In the current paper, we give a simplified presentation of the 
instantiation in which all possible instructions to read out and alter 
the content of Boolean registers are taken as basic instructions.

In the papers referred to above, the rationale for taking certain 
instructions to read out and alter the content of Boolean registers as
basic instructions is that the instructions concerned are sufficient to 
compute each function on bit strings of any fixed length by a finite 
instruction sequence.
However, shorter instruction sequences may be possible if certain 
additional instructions to read out and alter the content of Boolean 
registers are taken as basic instructions (see~\cite{BM15a}).
That is why we opted for the most general instantiation.

Both program algebra and basic thread algebra were first presented 
in~\cite{BL02a}.%
\footnote
{In that paper and the first subsequent papers, basic thread algebra 
 was introduced under the name basic polarized process algebra.}
An extension of basic thread algebra to deal with the interaction 
between instruction sequences under execution and components of their 
execution environment, called services, was presented for the first time
in~\cite{BP02a}. 
A substantial re-design of this extension was first presented 
in~\cite{BM09k}.
The presentation of both extensions is rather involved because they are
parameterized and owing to this cover a generic set of basic 
instructions and a generic set of services.
In the current paper, a much less involved presentation is obtained by 
covering only the case where the basic instructions are instructions to 
read out and alter the content of Boolean registers and the services are 
Boolean registers.

This paper is organized as follows.
First, we introduce program algebra (Section~\ref{sect-PGA}) and basic 
thread algebra (Section~\ref{sect-BTA}) and extend their combination to 
make precise which behaviours are produced by instruction sequences 
under execution (Section~\ref{sect-TE-BC}).
Next, we present the instantiation of the resulting theory in which all 
possible instructions to read out and alter Boolean registers are taken 
as basic instructions (Section~\ref{sect-PGAbr}), introduce an algebraic 
theory of Boolean register families (Section~\ref{sect-BRFA}), and 
extend the combination of the theories presented in the two preceding 
sections to deal with the interaction between instruction sequences 
under execution and Boolean registers (Section~\ref{sect-TSI}).
Then, we formalize in the setting of the resulting theory what it means 
that a given instruction sequence computes a given partial function from 
$\Bool^n$ to $\Bool^m$ ($n,m \in \Nat$) 
(Section~\ref{sect-comput-bool-fnc}) and give a survey of uses for the 
resulting theory (Section~\ref{sect-work-done}). 
Finally, we make some concluding remarks (Section~\ref{sect-concl}).

The following should be mentioned in advance.
The set $\Bool$ is a set with two elements whose intended 
interpretations are the truth values \emph{false} and \emph{true}.
As is common practice, we represent the elements of $\Bool$ by the bits 
$0$ and $1$.
In line with generally accepted conventions, we use terminology based on
identification of the elements of $\Bool$ with their representation 
where appropriate.
For example, the elements of $\Bool^n$ are loosely called bit strings of 
length $n$.

In this paper, some familiarity with algebraic specification is assumed.
The relevant notions are explained in handbook chapters and books on 
algebraic specification, e.g.~\cite{EM85a,ST99a,ST12a,Wir90a}.

This paper is to a large extent a compilation of material from several 
earlier publications.
Various examples, various explanatory remarks, and the axioms from 
Section~\ref{sect-TSI} do not occur in earlier publications. 

\section{Program Algebra}
\label{sect-PGA}

In this section, we present \PGA\ (ProGram Algebra).
The starting-point of \PGA\ is the perception of a program as a 
single-pass instruction sequence, i.e.\ a possibly infinite sequence of 
instructions of which each instruction is executed at most once and can 
be dropped after it has been executed or jumped over.
The concepts underlying the primitives of program algebra are common in
programming, but the particular form of the primitives is not common.
The predominant concern in the design of \PGA\ has been to achieve 
simple syntax and semantics, while maintaining the expressive power of 
arbitrary finite control.

It is assumed that a fixed but arbitrary set $\BInstr$ of 
\emph{basic instructions} has been given.
$\BInstr$ is the basis for the set of instructions that may occur in 
the instruction sequences considered in \PGA.
The intuition is that the execution of a basic instruction may modify a
state and must produce the Boolean value $\False$ or $\True$ as reply at 
its completion.
The actual reply may be state-dependent.

In applications of \PGA, the instructions taken as basic instructions 
vary, in effect, from instructions relating to unbounded counters, 
unbounded stacks or Turing tapes through instructions relating to 
Boolean registers or natural number registers to machine language 
instructions of actual computers.   

The set of instructions of which the instruction sequences considered 
in \PGA\ are composed is the set that consists of the following 
elements:
\begin{itemize}
\item
for each $a \in \BInstr$, a \emph{plain basic instruction} $a$;
\item
for each $a \in \BInstr$, a \emph{positive test instruction} $\ptst{a}$;
\item
for each $a \in \BInstr$, a \emph{negative test instruction} $\ntst{a}$;
\item
for each $l \in \Nat$, a \emph{forward jump instruction} $\fjmp{l}$;
\item
a \emph{termination instruction} $\halt$.
\end{itemize}
We write $\PInstr$ for this set.
The elements from this set are called \emph{primitive instructions}.

Primitive instructions are the elements of the instruction sequences 
considered in \PGA.
On execution of such an instruction sequence, these primitive 
instructions have the following effects:
\begin{itemize}
\item
the effect of a positive test instruction $\ptst{a}$ is that basic
instruction $a$ is executed and execution proceeds with the next
primitive instruction if $\True$ is produced and otherwise the next
primitive instruction is skipped and execution proceeds with the
primitive instruction following the skipped one --- if there is no
primitive instruction to proceed with,
inaction occurs;
\item
the effect of a negative test instruction $\ntst{a}$ is the same as
the effect of $\ptst{a}$, but with the role of the value produced
reversed;
\item
the effect of a plain basic instruction $a$ is the same as the effect
of $\ptst{a}$, but execution always proceeds as if $\True$ 
is produced;
\item
the effect of a forward jump instruction $\fjmp{l}$ is that execution
proceeds with the $l$th next primitive instruction --- if $l$ equals $0$ 
or there is no primitive instruction to proceed with, inaction occurs;
\item
the effect of the termination instruction $\halt$ is that execution 
terminates.
\end{itemize}
Inaction occurs if no more basic instructions are executed, but 
execution does not terminate.

A plain basic instruction $a$ is generally used in the case of a basic 
instruction $a$ that modifies a state and a positive test instruction 
$\ptst{a}$ or a negative test instruction $\ntst{a}$ is generally used 
in the case of a basic instruction $a$ that does not modify a state.
However, there are no rules prescribing such use.

\PGA\ has one sort: the sort $\InSeq$ of \emph{instruction sequences}. 
We make this sort explicit to anticipate the need for many-sortedness
later on.
To build terms of sort $\InSeq$, \PGA\ has the following constants and 
operators:
\begin{itemize}
\item
for each $u \in \PInstr$, 
the \emph{instruction} constant $\const{u}{\InSeq}$\,;
\item
the binary \emph{concatenation} operator 
$\funct{\ph \conc \ph}{\InSeq \x \InSeq}{\InSeq}$\,;
\item
the unary \emph{repetition} operator 
$\funct{\ph\rep}{\InSeq}{\InSeq}$\,.
\end{itemize}
Terms of sort $\InSeq$ are built as usual in the one-sorted case.
We assume that there are infinitely many variables of sort $\InSeq$, 
including $X,Y,Z$.
We use infix notation for concatenation and postfix notation for
repetition.
Taking these notational conventions into account, the syntax of closed 
\PGA\ terms (of sort $\InSeq$) can be defined in Backus-Naur style as 
follows:
\begin{ldispl}
\nm{CT}_\InSeq \Sis 
a \Sor \ptst{a} \Sor \ntst{a} \Sor \fjmp{l} \Sor \halt \Sor 
(\nm{CT}_\InSeq \conc \nm{CT}_\InSeq) \Sor ({\nm{CT}_\InSeq} \rep)\;,
\end{ldispl}%
where $a \in \BInstr$ and $l \in \Nat$.%
\footnote
{We use $\nm{CT}_\mathbf{S}$, where $\mathbf{S}$ is a sort, as 
 nonterminal standing for closed terms of sort~$\mathbf{S}$.}

Throughout the paper, we generally omit grouping parentheses if they can 
be unambiguously added or they are unnecessary because it is axiomatized 
that the operator concerned stands for an associative operation. 

A \PGA\ term in which the repetition operator does not occur is called 
a \emph{repetition-free} \PGA\ term.
A \PGA\ term that is not repetition-free is said to be a \PGA\ term that
\emph{has a repeating part}. 

One way of thinking about closed \PGA\ terms is that they represent 
non-empty, possibly infinite sequences of primitive instructions with 
finitely many distinct suffixes.
The instruction sequence represented by a closed term of the form
$t \conc t'$ is the instruction sequence represented by $t$
concatenated with the instruction sequence represented by $t'$.%
\footnote
{The concatenation of an infinite sequence with a finite or infinite 
sequence yields the former sequence.}
The instruction sequence represented by a closed term of the form 
$t\rep$ is the instruction sequence represented by $t$ concatenated 
infinitely many times with itself.
A closed \PGA\ term represents a finite instruction sequence if and 
only if it is a closed repetition-free \PGA\ term.
 
A simple example of a closed \PGA\ term is
\begin{ldispl}
(\ntst{a} \conc (\fjmp{3} \conc (b \conc \halt)))\rep\;.
\end{ldispl}%
On execution of the infinite instruction sequence denoted by this term, 
first the basic instruction $a$ is executed repeatedly until its 
execution produces the reply $\True$, next the basic instruction $b$ is 
executed, and after that execution terminates.
Because $(X \conc Y) \conc Z = X \conc (Y \conc Z)$ is an axiom of \PGA\ 
(see below), we could have written
$(\ntst{a} \conc \fjmp{3} \conc b \conc \halt) \rep$ instead of 
$(\ntst{a} \conc (\fjmp{3} \conc (b \conc \halt)))\rep$ above.

The axioms of \PGA\ are given in Table~\ref{axioms-PGA}.%
\begin{table}[!t]
\caption{Axioms of \PGA} 
\label{axioms-PGA}
\begin{eqntbl}
\begin{axcol}
(X \conc Y) \conc Z = X \conc (Y \conc Z)             & \axiom{PGA1}  \\
(X^n)\rep = X\rep                                     & \axiom{PGA2}  \\
X\rep \conc Y = X\rep                                 & \axiom{PGA3}  \\
(X \conc Y)\rep = X \conc (Y \conc X)\rep             & \axiom{PGA4} 
\eqnsep
\fjmp{k{+}1} \conc u_1 \conc \ldots \conc u_k \conc \fjmp{0} =
\fjmp{0} \conc u_1 \conc \ldots \conc u_k \conc \fjmp{0} 
                                                      & \axiom{PGA5}  \\
\fjmp{k{+}1} \conc u_1 \conc \ldots \conc u_k \conc \fjmp{l} =
\fjmp{l{+}k{+}1} \conc u_1 \conc \ldots \conc u_k \conc \fjmp{l}
                                                      & \axiom{PGA6}  \\
(\fjmp{l{+}k{+}1} \conc u_1 \conc \ldots \conc u_k)\rep =
(\fjmp{l} \conc u_1 \conc \ldots \conc u_k)\rep       & \axiom{PGA7}  \\
\fjmp{l{+}k{+}k'{+}2} \conc u_1 \conc \ldots \conc u_k \conc
(v_1 \conc \ldots \conc v_{k'{+}1})\rep = {} \\ \phantom{{}{+}k'}
\fjmp{l{+}k{+}1} \conc u_1 \conc \ldots \conc u_k \conc
(v_1 \conc \ldots \conc v_{k'{+}1})\rep               & \axiom{PGA8} 
\end{axcol}
\end{eqntbl}
\end{table}
In this table, 
$u$, $u_1,\ldots,u_k$ and $v_1,\ldots,v_{k'+1}$ stand for arbitrary 
primitive instructions from $\PInstr$, 
$k$, $k'$, and $l$ stand for arbitrary natural numbers from $\Nat$, and
$n$ stands for an arbitrary natural number from $\Natpos$.%
\footnote
{We write $\Natpos$ for the set $\set{n \in \Nat \where n \geq 1}$ of
positive natural numbers.}
For each $n \in \Natpos$, the term $t^n$, where $t$ is a \PGA\ term, 
is defined by induction on $n$ as follows: $t^1 = t$, and 
$t^{n+1} = t \conc t^n$.

Some simple examples of equations derivable from the axioms of \PGA\ are
\begin{ldispl}
(a \conc b)\rep \conc c = a \conc (b \conc a)\rep\;,
\\
\ptst{a} \conc (b \conc (\ntst{c} \conc \fjmp{2} \conc \halt)\rep)\rep
=
\ptst{a} \conc b \conc (\ntst{c} \conc \fjmp{2} \conc \halt)\rep\;.
\end{ldispl}%

Closed  \PGA\ terms $t$ and $t'$ represent the same instruction sequence 
iff $t = t'$ is derivable from PGA1--PGA4.
In this case, we say that the represented instruction sequences are 
\emph{instruction sequence congruent}.
We write \PGAisc\ for the algebraic theory whose sorts, constants and
operators are those of \PGA, but whose axioms are PGA1--PGA4.

The informal explanation of closed \PGA\ terms as sequences of primitive 
instructions given above can be looked upon as a sketch of the intended 
model of the axioms of \PGAisc.
This model, which is described in detail in, for example, \cite{BM12b}, 
is an initial model of the axioms of \PGAisc.

The \emph{unfolding equation} $X\rep = X \conc X\rep$ is derivable from
the axioms of \PGAisc\ by first taking the instance of PGA2 in which 
$n = 2$, then applying PGA4, and finally applying the instance of PGA2 
in which $n = 2$ again.

A closed \PGA\ term is in \emph{first canonical form} if it is of the 
form $t$ or $t \conc {t'}\rep$, where $t$ and $t'$ are closed 
repetition-free \PGA\ terms.
The following proposition, proved in~\cite{BM12b}, relates \PGAisc\ and 
first canonical forms.
\begin{proposition}
\label{prop-1CF}
For all closed \PGA\ terms $t$, there exists a closed \PGA\ term $t'$ 
that is in first canonical form such that $t = t'$ is derivable from
the axioms of \PGAisc.
\end{proposition}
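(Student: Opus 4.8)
The plan is to prove this by structural induction on the closed PGA term $t$, showing that every closed term can be rewritten, using only PGA1--PGA4, into first canonical form. Recall that a first canonical form is either a repetition-free term $t_0$ or a term $t_0 \conc {t_1}\rep$ with $t_0, t_1$ closed and repetition-free. The key observation I would exploit is that first canonical forms are closed under the two term-forming operations, modulo the axioms: so it suffices to show that whenever I have subterms already in first canonical form, I can bring the composite term into first canonical form as well.

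\smallskip\noindent
\textbf{Base case.} For a constant $u \in \PInstr$, the term $u$ is repetition-free and hence already in first canonical form, so take $t' = u$.

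\smallskip\noindent
\textbf{Inductive step for concatenation.} Suppose $t = t_1 \conc t_2$ and, by the induction hypothesis, $t_1 = s_1$ and $t_2 = s_2$ are derivable from \PGAisc\ with $s_1, s_2$ in first canonical form. I would split on whether $s_1$ has a repeating part. If $s_1$ is repetition-free, say $s_1 = p$, then I distinguish the shape of $s_2$: if $s_2 = q$ is repetition-free, then $p \conc q$ is repetition-free, hence in first canonical form; if $s_2 = q \conc {r}\rep$, then using associativity PGA1 the term $p \conc (q \conc {r}\rep) = (p \conc q) \conc {r}\rep$ is in first canonical form. If instead $s_1 = p \conc {q}\rep$ has a repeating part, then $t_1 \conc t_2 = (p \conc {q}\rep) \conc t_2$, and here the crucial move is PGA3, namely ${q}\rep \conc Y = {q}\rep$: repeated application (together with PGA1 to associate correctly) absorbs $t_2$ entirely, giving $p \conc {q}\rep$, which is already in first canonical form.

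\smallskip\noindent
\textbf{Inductive step for repetition.} Suppose $t = {t_1}\rep$ with $t_1 = s_1$ in first canonical form. If $s_1 = p$ is repetition-free, then ${p}\rep$ is already in first canonical form (with empty leading part). If $s_1 = p \conc {q}\rep$ has a repeating part, then I rewrite $({p} \conc {q}\rep)\rep$ using PGA3 to note that the inner repetition absorbs any following occurrences, and then collapse the outer repetition; concretely, applying the derived unfolding equation ${q}\rep = {q} \conc {q}\rep$ inside and PGA3 lets me show $({p} \conc {q}\rep)\rep = {p} \conc {q}\rep$, which is in first canonical form. I would verify this identity carefully as it is the only place where several axioms interact.

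\smallskip\noindent
I expect the main obstacle to be the repetition case when the operand already has a repeating part: one must argue that ${p}\rep$-style absorption via PGA3 genuinely reduces the nested repetition to first canonical form, and getting the bookkeeping of associativity (PGA1) right so that PGA3 applies in the intended position. The concatenation case where $s_1$ has a repeating part is routine once PGA3 is invoked, and the remaining cases are purely syntactic. Since the structural induction covers all closed terms and each case produces a term in first canonical form provably equal to $t$ over \PGAisc, the proposition follows.
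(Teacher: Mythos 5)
Your overall strategy---structural induction on closed \PGA\ terms, splitting on whether the canonical forms of the subterms have a repeating part---is the right one, and it is essentially the argument behind the proof that the paper itself only cites from the literature rather than spelling out. However, the one case you yourself flag as the crux is the one your sketch does not actually carry through. For $t = {t_1}\rep$ with $s_1 = p \conc {q}\rep$, the identity $(p \conc {q}\rep)\rep = p \conc {q}\rep$ is indeed derivable over \PGAisc, but not by the route you describe: PGA3 only absorbs material standing to the \emph{right} of a repetition, and unfolding ${q}\rep$ ``inside'' the outer repetition gets you nowhere, because the whole point is to move $p$ \emph{out} of the outer ${}\rep$. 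The axiom that does this, and which you never invoke in this case, is PGA4. The derivation is: $(p \conc {q}\rep)\rep = p \conc ({q}\rep \conc p)\rep$ by PGA4, $= p \conc ({q}\rep)\rep$ by PGA3, and then $({q}\rep)\rep = {q}\rep \conc ({q}\rep)\rep = {q}\rep$ by the unfolding equation (applied to $({q}\rep)\rep$, not to ${q}\rep$) followed by PGA3.

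A second, smaller slip: in the repetition case with $s_1 = p$ repetition-free, ${p}\rep$ is \emph{not} in first canonical form under the paper's definition, which requires either a repetition-free term or a term $t \conc {t'}\rep$ with $t$ a (necessarily non-empty) repetition-free term---there is no ``empty leading part'' in \PGA. One application of the unfolding equation, ${p}\rep = p \conc {p}\rep$, repairs this. With these two corrections the induction goes through; your base case and both concatenation cases are fine as written.
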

The examples given above of equations derivable from the axioms of \PGA\ 
are derivable from the axioms of \PGAisc\ only.
Their left-hand sides are not in first canonical form and their 
right-hand sides are in first canonical form.
Simple examples of equations derivable from the axioms of \PGA\ and not 
derivable from the axioms of \PGAisc\ are
\begin{ldispl}
\ntst{a} \conc \fjmp{2} \conc (\ptst{b} \conc \fjmp{2})\rep
=
\ntst{a} \conc \fjmp{0} \conc (\ptst{b} \conc \fjmp{0})\rep\;,
\\
\ptst{a} \conc \fjmp{6} \conc b \conc (\ntst{c} \conc \fjmp{9})\rep
=
\ptst{a} \conc \fjmp{2} \conc b \conc (\ntst{c} \conc \fjmp{1})\rep\;.
\end{ldispl}%

Closed  \PGA\ terms $t$ and $t'$ represent the same instruction sequence 
after changing all chained jumps into single jumps and making all jumps 
as short as possible iff $t = t'$ is derivable from PGA1--PGA8. 
In this case, we say that the represented instruction sequences are 
\emph{structurally congruent}.

A closed \PGA\ term $t$ \emph{has chained jumps} if there exists a 
closed \PGA\ term $t'$ such that $t = t'$ is derivable from the axioms
of \PGAisc\ and $t'$ contains a subterm of the form 
$\fjmp{n{+}1} \conc u_1 \conc \ldots \conc u_n \conc \fjmp{l}$.
A closed \PGA\ term $t$ of the form
$u_1 \conc \ldots \conc u_m \conc (v_1 \conc \ldots \conc v_k)\rep$
\emph{has shortest possible jumps} if:
(i)~for each $i \in [1,m]$ for which $u_i$ is of the form $\fjmp{l}$,
$l \leq k + m - i$;
(ii)~for each $j \in [1,k]$ for which $v_j$ is of the form $\fjmp{l}$,
$l \leq k - 1$.
A closed \PGA\ term is in \emph{second canonical form} if it is in first 
canonical form, does not have chained jumps, and has shortest possible 
jumps if it has a repeating part.
The following proposition, proved in~\cite{BM12b}, relates \PGA\ and 
second canonical forms.
\begin{proposition}
\label{prop-2CF}
For all closed \PGA\ terms $t$, there exists a closed \PGA\ term $t'$ 
that is in second canonical form such that $t = t'$ is derivable from 
the axioms of \PGA.
\end{proposition}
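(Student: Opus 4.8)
The plan is to invoke Proposition~\ref{prop-1CF} to reduce the general case to first canonical forms, and then to turn such a form into second canonical form by a terminating sequence of rewrite steps justified by PGA5--PGA8, using PGA1--PGA4 for rearranging. A first canonical form is either a repetition-free term $u_1 \conc \ldots \conc u_m$ or a term $u_1 \conc \ldots \conc u_m \conc (v_1 \conc \ldots \conc v_k)\rep$ with a repeating part, and I would treat the first shape and then extend the argument to the second, where the complication is that following a chain of jumps through the repeating part may wrap around and possibly cycle. Applying Proposition~\ref{prop-1CF} first yields a term $t_1$ in first canonical form with $t = t_1$ derivable from the axioms of \PGAisc\ and hence of \PGA; it then suffices to transform $t_1$, while staying within first canonical forms, so that it has no chained jumps and, if it has a repeating part, has shortest possible jumps.

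To remove chained jumps I would repeatedly redirect each jump that targets another jump straight to the latter's target. In the prefix this is exactly what PGA6 does, replacing $\fjmp{k+1} \conc u_1 \conc \ldots \conc u_k \conc \fjmp{l}$ by $\fjmp{l+k+1} \conc u_1 \conc \ldots \conc u_k \conc \fjmp{l}$, while PGA5 covers the case where the target is the inactive jump $\fjmp{0}$. For a jump lying inside the repeating part, I would first use PGA4 to rotate the repeating part so that the jump becomes leading, then apply PGA5 or PGA6 (and PGA7 to keep a leading jump within one period), and finally rotate back. Termination I would establish with the measure that sums, over all jump instructions, the number of jumps encountered before reaching a non-jump instruction or inaction: each application of PGA5 or PGA6 shortens one such chain by a step and strictly decreases the measure, a cycle in the repeating part being resolved to $\fjmp{0}$.

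With all chained jumps gone, I would shorten the remaining jumps. In the repeating part PGA7 reduces a jump (rotated to the front) modulo the period $k$ until $l \leq k - 1$, giving condition~(ii); for a prefix jump at position $i$ that overshoots into the repeating part, PGA8 reduces it until $l \leq k + m - i$, giving condition~(i). I would then verify that shortening a jump never causes it to land on a jump instruction, so that no chained jump is recreated and the absence of chained jumps together with conditions~(i) and~(ii) holds simultaneously. The result is a term $t'$ in second canonical form with $t = t'$ derivable from PGA1--PGA8, as required.

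The main obstacle is the combined termination-and-invariant argument for the repeating part. Each manipulation of a jump inside the repeating part requires rotating via PGA4 and normalizing the leading jump via PGA7, and I must make sure that these rotations, together with the unchaining steps of the second phase and the shortening steps of the third, fit a single well-founded measure and that neither phase undoes the invariant established by the other. Getting the measure to account for wrap-around chains --- and for the fact that PGA6 temporarily \emph{lengthens} a jump before the chain is cut --- is where the care is needed.
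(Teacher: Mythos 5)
The paper itself gives no proof of this proposition---it defers to~\cite{BM12b}---and your strategy is the one used there: reduce to first canonical form via Proposition~\ref{prop-1CF}, eliminate chained jumps with PGA5--PGA6 (using PGA2 and PGA4 to expose occurrences inside the repeating part and PGA7 to renormalize a leading jump), then minimize jumps with PGA7--PGA8. Your observation that PGA7 and PGA8 preserve the instruction targeted by a jump, so that the minimization phase cannot reintroduce chained jumps, is correct and is the key invariant; the phase ordering is also right.

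The one step that would fail as written is the termination measure. For a jump whose chain of targets cycles inside the repeating part, ``the number of jumps encountered before reaching a non-jump instruction or inaction'' is not a natural number, so the claim that every application of PGA5 or PGA6 strictly decreases the sum is vacuous precisely in the case you yourself single out as delicate. Two repairs are available. Either replace the summand by the number of \emph{distinct} jump positions visited along the chain---this is bounded by the length of the prefix plus the period, is finite even for cyclic chains, and strictly decreases under PGA6 because the shortcut removes one position from every chain passing through the rewritten jump---or treat cycles first: rotate one jump of a cycle to the head of the repeating part with PGA4, unchain along the cycle with PGA6 until the leading jump's length is a positive multiple of the period, collapse it to $\fjmp{0}$ with PGA7, and propagate $\fjmp{0}$ backwards through the remaining jumps of the former cycle with PGA5; after that every chain ends in a non-jump or in $\fjmp{0}$ and your original measure applies. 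With either repair the argument goes through.
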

The examples given above of equations derivable from the axioms of \PGA\ 
and not derivable from the axioms of \PGAisc\ have left-hand sides that
are not in second canonical form and right-hand sides that are in second 
canonical form.

Henceforth, the instruction sequences of the kind considered in \PGA\ 
are called \PGA\ instruction sequences.

In Section~\ref{sect-TSI}, we will use the notation 
$\Conc{i = 1}{n} t_i$.
For each $i \in \Natpos$, let $t_i$ be \PGA\ terms.
Then, for each $n \in \Natpos$, the term $\Conc{i = 1}{n} t_i$ is 
defined by induction on $n$ as follows: $\Conc{i = 1}{1} t_i = t_1$ 
and $\Conc{i = 1}{n+1} t_i = \Conc{i = 1}{n} t_i \conc t_{n+1}$.

\section{Basic Thread Algebra for Finite and Infinite Threads}
\label{sect-BTA}

In this section, we present \BTA\ (Basic Thread Algebra) and an 
extension of \BTA\ that reflects the idea that infinite threads are 
identical if their approximations up to any finite depth are identical.

\BTA\ is concerned with mathematical objects that model in a direct 
way the behaviours produced by \PGA\ instruction sequences under 
execution.
The objects in question are called threads.
A thread models a behaviour that consists of performing basic actions in 
a sequential fashion.
Upon performing a basic action, a reply from an execution environment
determines how the behaviour proceeds subsequently.
The possible replies are the Boolean values $\False$ and $\True$.

The basic instructions from $\BInstr$ are taken as basic actions.
Besides, $\Tau$ is taken as a special basic action.
It is assumed that $\Tau \notin \BAct$.
We write $\BActTau$ for $\BAct \union \set{\Tau}$.

\BTA\ has one sort: the sort $\Thr$ of \emph{threads}. 
We make this sort explicit to anticipate the need for many-sortedness
later on.
To build terms of sort $\Thr$, \BTA\ has the following constants and 
operators:
\begin{itemize}
\item
the \emph{inaction} constant $\const{\DeadEnd}{\Thr}$;
\item
the \emph{termination} constant $\const{\Stop}{\Thr}$;
\item
for each $\alpha \in \BActTau$, the binary 
\emph{postconditional composition} operator 
$\funct{\pcc{\ph}{\alpha}{\ph}}{\Thr \x \Thr}{\Thr}$.
\end{itemize}
Terms of sort $\Thr$ are built as usual in the one-sorted case. 
We assume that there are infinitely many variables of sort $\Thr$, 
including $x,y,z$.
We use infix notation for postconditional composition.
Taking this notational convention into account, the syntax of closed 
\BTA\ terms (of sort $\Thr$) can be defined in Backus-Naur style as 
follows:
\begin{ldispl}
\nm{CT}_\Thr \Sis 
\DeadEnd \Sor \Stop \Sor 
(\pcc{\nm{CT}_\Thr}{\alpha}{\nm{CT}_\Thr})\;, 
\end{ldispl}%
where $\alpha \in \BActTau$.
We introduce \emph{basic action prefixing} as an abbreviation: 
$\alpha \bapf t$, where $\alpha \in \BActTau$ and $t$ is a \BTA\ term, 
abbreviates $\pcc{t}{\alpha}{t}$.
We treat an expression of the form $\alpha \bapf t$ and the \BTA\ term 
that it abbreviates as syntactically the same.

Closed \BTA\ terms are considered to represent threads.
The thread represented by a closed term of the form 
$\pcc{t}{\alpha}{t'}$ models the behaviour that first performs $\alpha$, 
and then proceeds as the behaviour modeled by the thread represented by 
$t$ if the reply from the execution environment is $\True$ and proceeds 
as the behaviour modeled by the thread represented by $t'$ if the reply 
from the execution environment is $\False$. 
Performing $\Tau$, which is considered performing an internal action,
always leads to the reply $\True$.
The thread represented by $\Stop$ models the behaviour that does nothing 
else but terminate and the thread represented by $\DeadEnd$ models the 
behaviour that is inactive, i.e.\ it performs no more basic actions and 
it does not terminate. 

A simple example of a closed \BTA\ term is
\begin{ldispl}
\pcc{(b \bapf \Stop)}{a}{\DeadEnd}\;.
\end{ldispl}%
This term denotes the thread that first performs basic action $a$, if 
the reply from the execution environment on performing $a$ is $\True$, 
it next performs the basic action $b$ and then terminates, and if the 
reply from the execution environment on performing $a$ is $\False$, it 
next becomes inactive.

\BTA\ has only one axiom.
This axiom is given in Table~\ref{axioms-BTA}.
\begin{table}[!t]
\caption{Axioms of \BTA} 
\label{axioms-BTA}
\begin{eqntbl}
\begin{axcol}
\pcc{x}{\Tau}{y} = \pcc{x}{\Tau}{x}                      & \axiom{T1}
\end{axcol}
\end{eqntbl}
\end{table}
Using the abbreviation introduced above, it can also be written as
follows: $\pcc{x}{\Tau}{y} = \Tau \bapf x$.

Each closed \BTA\ term represents a finite thread, i.e.\ a thread with 
a finite upper bound to the number of basic actions that it can perform.
Infinite threads, i.e.\ threads without a finite upper bound to the
number of basic actions that it can perform, can be defined by means of 
a set of recursion equations (see e.g.~\cite{BM09k}).

A simple example of a set of recursion equations that consists of a 
single equation is
\begin{ldispl}
x = \pcc{(b \bapf \Stop)}{a}{x}\;.
\end{ldispl}%
Its solution is the thread that first repeatedly performs basic action 
$a$ until the reply from the execution environment on performing $a$ is 
$\True$, next performs the basic action $b$ and then terminates.

A regular thread is a finite or infinite thread that can be defined by 
means of a finite set of recursion equations.
The behaviours produced by \PGA\ instruction sequences under execution 
are exactly the behaviours modeled by regular threads.

Two infinite threads are considered identical if their approximations up 
to any finite depth are identical.
The approximation up to depth $n$ of a thread models the behaviour that 
differs from the behaviour modeled by the thread in that it will become
inactive after it has performed $n$ actions unless it would terminate at
this point.
AIP (Approximation Induction Principle) is a conditional equation that
formalizes the above-mentioned view on infinite threads.
In AIP, the approximation up to depth $n$ is phrased in terms of the
unary \emph{projection} operator $\funct{\proj{n}}{\Thr}{\Thr}$.

The axioms for the projection operators and AIP are given in
Table~\ref{axioms-BTAinf}.
\begin{table}[!b]
\caption{Axioms for the projection operators and AIP} 
\label{axioms-BTAinf}
\begin{eqntbl}
\begin{axcol}
\proj{0}(x) = \DeadEnd                                  & \axiom{PR1} \\
\proj{n+1}(\DeadEnd) = \DeadEnd                         & \axiom{PR2} \\
\proj{n+1}(\Stop) = \Stop                               & \axiom{PR3} \\
\proj{n+1}(\pcc{x}{\alpha}{y}) =
\pcc{\proj{n}(x)}{\alpha}{\proj{n}(y)}                  & \axiom{PR4}
\eqnsep
\LAND{n \geq 0} \proj{n}(x) = \proj{n}(y) \Limpl x = y  & \axiom{AIP}
\end{axcol}
\end{eqntbl}
\end{table}
In this table, $\alpha$ stands for an arbitrary basic action from 
$\BActTau$ and $n$ stands for an arbitrary natural number from $\Nat$.
We write \BTAinf\ for \BTA\ extended with the projection operators, 
the axioms for the projection operators, and AIP.

By AIP, we have to deal in \BTAinf\ with conditional equational formulas 
with a countably infinite number of premises.
Therefore, infinitary conditional equational logic is used in deriving 
equations from the axioms of \BTAinf.
A complete inference system for infinitary conditional equational logic 
can be found in, for example, \cite{GV93}.

For a simple example of the use of the axioms for the projection 
operators and AIP, we consider the (recursion) equations $x = a \bapf x$ 
and $y = a \bapf a \bapf y$.
With these equations as hypotheses, the following equations are 
derivable from the axioms for the projection operators:  
\begin{ldispl}
\begin{array}[t]{@{}l@{\qquad}l}
\proj{0}(x) = \DeadEnd\;, & \proj{0}(y) = \DeadEnd\;,
\\
\proj{1}(x) = a \bapf \DeadEnd\;, & \proj{1}(y) = a \bapf \DeadEnd\;,
\\
\proj{2}(x) = a \bapf a \bapf \DeadEnd\;, & 
\proj{2}(y) = a \bapf a \bapf \DeadEnd\;,
\\
\proj{3}(x) = a \bapf a \bapf a \bapf \DeadEnd\;, &  
\proj{3}(y) = a \bapf a \bapf a \bapf \DeadEnd\;,
\\
\phantom{\proj{3}(y) = a \bapf a \bapf a \bapf \DeadEnd\;,} \quad \vdots 
\end{array}
\end{ldispl}%
Hence, the conditional equation 
$x = a \bapf x \Land y = a \bapf a \bapf y \Limpl x = y$ 
is derivable from the axioms for the projection operators and AIP.
This conditional equation tells us that the recursion equations 
$x = a \bapf x$ and $y = a \bapf a \bapf y$ have the same solution.

\section{Thread Extraction and Behavioural Congruence}
\label{sect-TE-BC}

In this section, we make precise in the setting of \BTAinf\ which 
behaviours are produced by \PGA\ instruction sequences under 
execution and introduce the notion of behavioural congruence on \PGA\ 
instruction sequences.

To make precise which behaviours are produced by \PGA\ instruction 
sequences under execution, we introduce an operator $\extr{\ph}$ meant 
for extracting from each \PGA\ instruction sequence the thread that 
models the behaviour produced by it under execution.
For each closed \PGA\ term $t$, $\extr{t}$ represents the thread that
models the behaviour produced by the instruction sequence represented 
by $t$ under execution.

Formally, we combine \PGA\ with \BTAinf\ and extend the combination with 
the \emph{thread extraction} operator $\funct{\extr{\ph}}{\InSeq}{\Thr}$ 
and the axioms given in Table~\ref{axioms-thread-extr}.%
\begin{table}[!t]
\caption{Axioms for the thread extraction operator} 
\label{axioms-thread-extr}
\begin{eqntbl}
\begin{axcol}
\extr{a} = a \bapf \DeadEnd                            & \axiom{TE1}  \\
\extr{a \conc X} = a \bapf \extr{X}                    & \axiom{TE2}  \\
\extr{\ptst{a}} = a \bapf \DeadEnd                     & \axiom{TE3}  \\
\extr{\ptst{a} \conc X} = \pcc{\extr{X}}{a}{\extr{\fjmp{2} \conc X}}
                                                       & \axiom{TE4}  \\
\extr{\ntst{a}} = a \bapf \DeadEnd                     & \axiom{TE5}  \\
\extr{\ntst{a} \conc X} = \pcc{\extr{\fjmp{2} \conc X}}{a}{\extr{X}}
                                                       & \axiom{TE6}
\end{axcol}
\qquad
\begin{axcol}
\extr{\fjmp{l}} = \DeadEnd                             & \axiom{TE7}  \\
\extr{\fjmp{0} \conc X} = \DeadEnd                     & \axiom{TE8}  \\
\extr{\fjmp{1} \conc X} = \extr{X}                     & \axiom{TE9}  \\
\extr{\fjmp{l+2} \conc u} = \DeadEnd                   & \axiom{TE10} \\
\extr{\fjmp{l+2} \conc u \conc X} = \extr{\fjmp{l+1} \conc X}
                                                       & \axiom{TE11} \\
\extr{\halt} = \Stop                                   & \axiom{TE12} \\
\extr{\halt \conc X} = \Stop                           & \axiom{TE13}
\end{axcol}
\end{eqntbl}
\end{table}
In this table, 
$a$ stands for an arbitrary basic instruction from $\BInstr$, 
$u$ stands for an arbitrary primitive instruction from $\PInstr$, and 
$l$ stands for an arbitrary natural number from $\Nat$.
We write \PGABTA\ for the combination of \PGA\ and \BTAinf\ extended 
with the thread extraction operator and the axioms for the thread 
extraction operator.
The syntax of closed \PGABTA\ terms of sort $\Thr$ can be defined in 
Backus-Naur style as follows: 
\begin{ldispl}
\nm{CT}'_\Thr \Sis 
\DeadEnd \Sor \Stop \Sor 
(\pcc{\nm{CT}'_\Thr}{\alpha}{\nm{CT}'_\Thr}) \Sor
\extr{\nm{CT}_\InSeq}\;, 
\end{ldispl}%
where $\alpha \in \BActTau$.
$\nm{CT}_\InSeq$ is defined in Section~\ref{sect-PGA}.

A simple example of thread extraction is
\begin{ldispl}
\extr{\ptst{a} \conc \fjmp{2} \conc \fjmp{3} \conc b \conc \halt} =
\pcc{(b \bapf \Stop)}{a}{\DeadEnd}\;.
\end{ldispl}%
In the case of infinite instruction sequences, thread extraction yields 
threads definable by means of a set of recursion equations.
For example, 
\begin{ldispl}
\extr{(\ptst{a} \conc \fjmp{2} \conc \fjmp{3} \conc b \conc \halt)\rep}
\end{ldispl}%
is the solution of the set of recursion equations that consists of the 
single equation
\begin{ldispl}
x = \pcc{(b \bapf \Stop)}{a}{x}\;.
\end{ldispl}%

If a closed \PGA\ term $t$ represents an instruction sequence that
starts with an infinite chain of forward jumps, then TE9 and TE11 can 
be applied to $\extr{t}$ infinitely often without ever showing that a 
basic action is performed.
In this case, we have to do with inaction and, being consistent with 
that, $\extr{t} = \DeadEnd$ is derivable from the axioms of \PGA\ and 
TE1--TE13.
By contrast, $\extr{t} = \DeadEnd$ is not derivable from the axioms of 
\PGAisc\ and TE1--TE13.
However, if closed \PGA\ terms $t$ and $t'$ represent instruction 
sequences in which no infinite chains of forward jumps occur, then 
$t = t'$ is derivable from the axioms of \PGA\ only if 
$\extr{t} = \extr{t'}$ is derivable from the axioms of \PGAisc\ and 
TE1--TE13.

If a closed \PGA\ term $t$ represents an infinite instruction 
sequence, then we can extract the approximations of the thread modeling
the behaviour produced by that instruction sequence under execution up 
to every finite depth: for each $n \in \Nat$, there exists a closed 
\BTA\ term $t''$ such that $\proj{n}(\extr{t}) = t''$ is derivable 
from the axioms of \PGA, TE1--TE13, the axioms of \BTA, and PR1--PR4.
If closed \PGA\ terms $t$ and $t'$ represent infinite instruction 
sequences that produce the same behaviour under execution, then this can
be proved using the following instance of AIP:
$\LAND{n \geq 0} \proj{n}(\extr{t}) = \proj{n}(\extr{t'}) \Limpl
 \extr{t} = \extr{t'}$.

\PGA\ instruction sequences are behaviourally equivalent if they 
produce the same behaviour under execution.
Behavioural equivalence is not a congruence.
Instruction sequences are behaviourally congruent if they produce the
same behaviour irrespective of the way they are entered and the way
they are left.

Let $t$ and $t'$ be closed \PGA\ terms.
Then:
\begin{itemize}
\item
$t$ and $t'$ are \emph{behaviourally equivalent}, 
written $t \beqv t'$, if $\extr{t} = \extr{t'}$ is derivable
from the axioms of \PGABTA.
\item
$t$ and $t'$ are \emph{behaviourally congruent}, written 
$t \bcong t'$, if, for each $l,n \in \Nat$,
$\fjmp{l} \conc t \conc \halt^n \beqv \fjmp{l} \conc t' \conc \halt^n$.%
\footnote
{We use the convention that $t \conc {t'}^0$ stands for $t$.}
\end{itemize}

Some simple examples of behavioural equivalence are
\begin{ldispl}
a \conc \fjmp{2} \conc \ptst{b} \conc \halt \beqv
a \conc \fjmp{2} \conc \ptst{c} \conc \halt\;,
\\
(\ptst{a} \conc \fjmp{2} \conc \fjmp{3} \conc b \conc \halt)\rep \beqv
(\ntst{a} \conc \fjmp{3} \conc b \conc \halt)\rep\;.
\end{ldispl}%
We cannot lift these examples to behavioural congruence, i.e. 
\begin{ldispl}
a \conc \fjmp{2} \conc \ptst{b} \conc \halt \not\bcong
a \conc \fjmp{2} \conc \ptst{c} \conc \halt\;,
\\
(\ptst{a} \conc \fjmp{2} \conc \fjmp{3} \conc b \conc \halt)\rep 
 \not\bcong
(\ntst{a} \conc \fjmp{3} \conc b \conc \halt)\rep\;.
\end{ldispl}%
A simple example of behavioural congruence is
\begin{ldispl}
(\ptst{a} \conc \fjmp{3} \conc \fjmp{2} \conc b)\rep \bcong
(\ntst{a} \conc \fjmp{3} \conc \fjmp{2} \conc b)\rep\;.
\end{ldispl}%

It is proved in~\cite{BM12b} that each closed \PGA\ term is 
behaviourally equivalent to a term of the form $t\rep$, where $t$ is a 
closed repetition-free \PGA\ term.
\begin{proposition}
\label{prop-ICF}
For all closed \PGA\ terms $t$, there exists a closed repetition-free 
\PGA\ term $t'$ such that $t \beqv {t'}\rep$.
\end{proposition}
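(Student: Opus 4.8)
The plan is to reduce the problem to a canonical form by means of Proposition~\ref{prop-2CF} and then to fold the resulting term into a single repetition. First I would use Proposition~\ref{prop-2CF} to obtain, for the given closed \PGA\ term $t$, a closed \PGA\ term $s$ in second canonical form with $t = s$ derivable from the axioms of \PGA. Since all axioms of \PGA\ are axioms of \PGABTA, $\extr{t} = \extr{s}$ is then derivable from the axioms of \PGABTA, so $t \beqv s$ and it suffices to treat $s$. By the definition of second canonical form, $s$ is either repetition-free, say $s = u_1 \conc \ldots \conc u_m$, or of the form $s = u_1 \conc \ldots \conc u_m \conc (v_1 \conc \ldots \conc v_k)\rep$ with the $u_i$ and $v_j$ primitive instructions; moreover $s$ has no chained jumps, so the target of every jump instruction is a non-jump instruction, and its jumps are as short as possible.

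In the repetition-free case, the idea is that running off the end of a finite instruction sequence yields inaction, which can be reproduced inside a repetition by a terminal $\fjmp{0}$. I would take $t' = w_1 \conc \ldots \conc w_m \conc \fjmp{0}$, where each $w_i$ equals $u_i$ except that a forward jump $u_i = \fjmp{l}$ whose target position $i + l$ exceeds $m$ (and which therefore causes inaction in $s$) is replaced by a forward jump landing on the terminal $\fjmp{0}$. In the infinite instruction sequence denoted by ${t'}\rep$ every fall-through and every over-long jump then reaches the $\fjmp{0}$, at which point thread extraction yields $\DeadEnd$ by TE8; hence ${t'}\rep$ never performs more actions than $s$, and $\extr{{t'}\rep} = \extr{s}$ follows by finitely many applications of TE1--TE13.

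The substantial case is $s = u_1 \conc \ldots \conc u_m \conc (v_1 \conc \ldots \conc v_k)\rep$, where the finite prefix must be absorbed into a single repeating part that is executed only once. The plan is to take $t' = w_1 \conc \ldots \conc w_m \conc x_1 \conc \ldots \conc x_k \conc \fjmp{m+1}$, of period $N = m + k + 1$. The trailing instruction $\fjmp{m+1}$ routes the fall-through after $x_k$ over the prefix of the next period straight to the copy of $x_1$, so that after the first pass the behaviour stabilises to that of $(v_1 \conc \ldots \conc v_k)\rep$ while $u_1, \ldots, u_m$ are performed exactly once. Each $w_i$ and each $x_j$ is obtained from $u_i$ and $v_j$ by keeping non-jump instructions unchanged and by recomputing the length of every forward jump so that, in the sequence denoted by ${t'}\rep$, control is transferred to the copy of the very primitive instruction that the corresponding jump of $s$ targets --- using, where the target lies earlier in the block, the copy of that instruction in the next period (a forward jump can always reach it). Because $s$ has no chained jumps, these targets are well-defined non-jump instructions, and because its prefix jumps are bounded by condition~(i) of second canonical form, the jumps originating in the $u$-block already land within the first period and need no genuine adjustment.

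The verification that $\extr{{t'}\rep} = \extr{s}$ I would carry out with AIP: show by induction on $n$ that $\proj{n}(\extr{{t'}\rep}) = \proj{n}(\extr{s})$ using PR1--PR4 and TE1--TE13, and conclude by AIP. The hard part is exactly the recomputation of the jump lengths and the proof that it is faithful: one must check that every forward jump of $t'$ --- including those in the $v$-block whose intended target wraps past the routing instruction $\fjmp{m+1}$ --- lands on the instruction prescribed by the execution of $s$, rather than on a copy of some $u_i$ or on the routing jump itself. Once this routing invariant is established, the inductive step matches the two threads action by action, and AIP closes the argument.
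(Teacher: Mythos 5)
Your overall strategy---pass to second canonical form via Proposition~\ref{prop-2CF} and then fold the result into a single repetition with recomputed jumps, verified by AIP---is a sensible one (the paper itself gives no proof here, deferring to~\cite{BM12b}), but your construction has a concrete defect: you recompute the lengths of \emph{explicit} forward jump instructions only, whereas test instructions carry an \emph{implicit} jump of length $2$ that you leave untouched. By TE4 and TE6, the branch of $\ptst{a}$ (resp.\ $\ntst{a}$) that skips the next instruction is routed through $\fjmp{2}$, so a test occupying the last position of the block you are sealing off escapes your guards. In the repetition-free case, take $s = \ptst{a}$, which is in second canonical form; your $t'$ is $\ptst{a} \conc \fjmp{0}$, and by the unfolding equation, TE4, TE8, TE11 and TE9 one gets $\extr{(\ptst{a} \conc \fjmp{0})\rep} = \pcc{\DeadEnd}{a}{\extr{(\ptst{a} \conc \fjmp{0})\rep}}$: the skipping branch jumps over the single trailing $\fjmp{0}$ onto the copy of $\ptst{a}$ in the next period, so the resulting thread is the solution of $x = \pcc{\DeadEnd}{a}{x}$ rather than $\extr{\ptst{a}} = a \bapf \DeadEnd$. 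The same defect recurs in the repeating case: for $s = (b \conc \ptst{a})\rep$ your $t' = b \conc \ptst{a} \conc \fjmp{1}$ sends the skipping branch of the test to the copy of $b$ in the next period, whereas in $s$ it must reach the next occurrence of $\ptst{a}$. Neither the absence of chained jumps nor condition~(ii) on shortest possible jumps rules these situations out, since both conditions speak only about jump instructions.

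The repair must therefore be built into the construction rather than left to the ``routing invariant'' you postpone. In the repetition-free case two trailing $\fjmp{0}$ instructions are needed, so that both the fall-through position and the skip position of a final test are intercepted. In the repeating case a single routing jump after $x_k$ cannot serve both the fall-through target (the analogue of $v_1$) and the skip target of a test at $v_k$ (the analogue of $v_2$); you need two routing jumps with separately computed lengths, or, more robustly, you should compile each original instruction into a fixed-width block (a test followed by two explicit jumps, a $\halt$, or a $\fjmp{0}$, padded to uniform length) so that every implicit skip lands inside its own block and only explicit jumps ever cross block boundaries. The latter is essentially the passage through finite systems of recursion equations that underlies Proposition~\ref{prop-expr} and the proof in~\cite{BM12b}. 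The AIP part of your plan is the right tool, but note that the induction on $n$ must be carried out simultaneously for all residual positions of the two instruction sequences, not just for the initial one.
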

Behavioural congruence is the largest congruence contained in
behavioural equivalence.
Moreover, structural congruence implies behavioural congruence.
\begin{proposition}
\label{prop-scongr-bequiv}
For all closed \PGA\ terms $t$ and $t'$,
$t = t'$ is derivable from the axioms of \PGA\ only if $t \bcong t'$.
\end{proposition}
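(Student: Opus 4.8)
The plan is to reduce everything to the single observation that the axioms \axiom{PGA1}--\axiom{PGA8} of \PGA\ are themselves among the axioms of \PGABTA. Consequently, structural congruence on closed \PGA\ terms is already contained in the equality that \PGABTA\ derives on terms of sort $\InSeq$, and the proposition will follow from two applications of the congruence (replacement) rule of the underlying equational logic, with no computation using the thread extraction axioms \axiom{TE1}--\axiom{TE13} or \axiom{AIP} required.

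Concretely, I would proceed as follows. Assume that $t = t'$ is derivable from the axioms of \PGA, and fix arbitrary $l,n \in \Nat$. The congruence rule lets me place both sides inside the context $\fjmp{l} \conc \ph \conc \halt^n$, so $\fjmp{l} \conc t \conc \halt^n = \fjmp{l} \conc t' \conc \halt^n$ is derivable from the axioms of \PGA\ as well, and hence---since every axiom of \PGA\ is an axiom of \PGABTA---from the axioms of \PGABTA. A second application of the congruence rule, this time to the unary operator $\extr{\ph}$, yields that $\extr{\fjmp{l} \conc t \conc \halt^n} = \extr{\fjmp{l} \conc t' \conc \halt^n}$ is derivable from the axioms of \PGABTA, which is precisely $\fjmp{l} \conc t \conc \halt^n \beqv \fjmp{l} \conc t' \conc \halt^n$. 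As $l$ and $n$ were arbitrary, this is by definition $t \bcong t'$.

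There is no genuine computational obstacle here; the only point demanding care is the bookkeeping about which equations belong to which theory. In particular, the argument depends on \PGABTA\ containing the full axiom set \axiom{PGA1}--\axiom{PGA8}, and not merely the instruction-sequence-congruence fragment \axiom{PGA1}--\axiom{PGA4}: it is exactly \axiom{PGA5}--\axiom{PGA8} that force $\extr{t}$ to collapse to $\DeadEnd$ on instruction sequences with infinite chains of forward jumps, so that $t = t'$ from \PGA\ really does entail $\extr{t} = \extr{t'}$ in every context. One might otherwise expect to have to verify directly that each closed instance of \axiom{PGA5}--\axiom{PGA8} relates behaviourally congruent terms---which for the repeating axioms \axiom{PGA7} and \axiom{PGA8} would need reasoning with projections and \axiom{AIP}---and the congruence-rule shortcut avoids all of this. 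The same conclusion can also be read off the already-stated fact that $\bcong$ is the largest congruence contained in $\beqv$: structural congruence is a congruence on closed terms, and applying the congruence rule for $\extr{\ph}$ directly to $t = t'$ shows it is contained in $\beqv$, hence in $\bcong$. The role of the $\fjmp{l} \conc \ph \conc \halt^n$ wrapper is only to cut $\beqv$ down to a congruence; it contributes nothing to the present direction, since structural congruence is preserved under it.
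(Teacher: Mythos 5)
Your proof is correct for the proposition as literally stated, but it takes a genuinely different route from the paper's. You observe that $\beqv$ is defined via derivability of $\extr{t}=\extr{t'}$ in \PGABTA, and that \PGABTA\ contains all of PGA1--PGA8; hence two applications of the replacement rule of (many-sorted, infinitary conditional) equational logic---first for the context $\fjmp{l} \conc \ph \conc \halt^n$, then for $\extr{\ph}$---immediately give the result, with no use of TE1--TE13, the projection axioms, or AIP. That is a sound and pleasingly economical argument, and you correctly pinpoint that it hinges on $\beqv$ being defined relative to the full theory rather than to \PGAisc\ plus TE1--TE13. The paper's proof, inherited from Proposition~2.2 of the cited earlier work, instead argues at the level of the threads themselves: it shows that the extractions of $\fjmp{l} \conc t \conc \halt^n$ and $\fjmp{l} \conc t' \conc \halt^n$ are components of the solution of one and the same guarded set of recursion equations, and concludes by the uniqueness of such solutions, which is a consequence of AIP. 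What that heavier argument buys is semantic content: it verifies that every closed instance of PGA5--PGA8 genuinely relates terms with identical extracted behaviour (including the delicate repeating cases and the collapse to $\DeadEnd$ on infinite jump chains), rather than relying on the fact that PGA5--PGA8 have been postulated in the very theory in which $\beqv$ is defined. In particular, the paper's proof survives the sharper reading suggested by the surrounding discussion---behavioural equivalence relative to \PGAisc\ and TE1--TE13 only, or equality in the intended model---where your replacement-rule shortcut is no longer available because the instruction-sequence sort lacks PGA5--PGA8. Your closing remark, deriving the result from the statement that $\bcong$ is the largest congruence contained in $\beqv$, is also sound but leans on a fact the paper asserts without proof, so the direct two-step argument is the better one to keep.
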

\begin{proof}
The proof is basically the proof of Proposition~2.2 
from~\cite{BM12b}.
In that proof use is made of the uniqueness of solutions of sets of 
recursion equations where each right-hand side is a \BTA\ term of
the form $\DeadEnd$, $\Stop$ or $\pcc{s}{\alpha}{s'}$ with \BTA\ terms 
$s$ and $s'$ that contain only variables occurring as one of the 
right-hand sides.
This uniqueness follows from AIP (see also Corollary~2.1 
from~\cite{BM12b}). 
\qed
\end{proof} 
Conversely, behavioural congruence does not imply structural
congruence.
For example,
$\ptst{a} \conc \halt \conc \halt \bcong
 \ntst{a} \conc \halt \conc \halt$,
but
$\ptst{a} \conc \halt \conc \halt =
 \ntst{a} \conc \halt \conc \halt$ 
is not derivable from the axioms of \PGA.

In~\cite{BM17a}, we present an equational axiom system for behavioural
congruence that is sound for closed \PGA\ terms and complete for closed 
repetition-free \PGA\ terms.

The following proposition, proved in~\cite{BM12b}, puts the 
expressiveness of \PGA\ in terms of producible behaviours.
\begin{proposition}
\label{prop-expr}
Let $\mathcal{M}$ be a model of \PGABTA.
Then, for each element $p$ from the domain associated with the sort 
$\Thr$ in $\mathcal{M}$, there exists a closed \PGA\ term $t$ such that 
$p$ is the interpretation of $\extr{t}$ in $\mathcal{M}$ iff $p$ is a 
component of the solution of a finite set of recursion equations 
$\set{V = t_V \where V \in \mathcal{V}}$, where $\mathcal{V}$ is a set 
of variables of sort $\Thr$ and each $t_V$ is a \BTA\ term that is not 
a variable and contains only variables from~$\mathcal{V}$.
\end{proposition}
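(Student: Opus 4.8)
The statement is a biconditional, so the plan is to establish the two implications separately; together they amount to the known correspondence between the threads extractable from \PGA\ instruction sequences and the regular threads, i.e.\ the components of solutions of finite recursive specifications over the \BTA\ signature. Throughout I would argue in terms of interpretations in $\mathcal{M}$ and exploit that all axioms of \PGABTA, in particular AIP, hold in $\mathcal{M}$. AIP guarantees that each such finite specification has at most one solution in $\mathcal{M}$, which is what makes the phrase ``the solution'' legitimate and lets me transfer equalities between extracted threads and solution components by comparing all finite projections.

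For the left-to-right direction I would start from a closed \PGA\ term $t$ whose extracted thread interprets as $p$. By Proposition~\ref{prop-ICF} there is a repetition-free $t'$ with $t \beqv {t'}\rep$, so $p$ is the interpretation of $\extr{{t'}\rep}$; writing $t' = u_1 \conc \ldots \conc u_m$, the sequence ${t'}\rep$ is purely periodic and has at most $m$ distinct suffixes. I would introduce one variable $V_i$ for the thread extracted from the suffix starting at position $i$ (indices read cyclically) and obtain a one-step equation for each $V_i$ directly from TE1--TE13: a plain basic, test, or termination instruction at position $i$ gives a right-hand side $a \bapf V_{i+1}$, $\pcc{V_{i+1}}{a}{V_{i+2}}$, $\pcc{V_{i+2}}{a}{V_{i+1}}$, or $\Stop$, each a non-variable \BTA\ term, whereas a forward jump gives the forbidden shape $V_i = V_{i+l}$ (and $\fjmp{0}$ gives $\DeadEnd$). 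To repair the jump equations I would form the finite directed graph of jump steps and, from each jump position, follow its edges: a path reaching a non-jump position $i^\ast$ licenses the substitution $V_i := V_{i^\ast}$, while a path entering a cycle -- an infinite chain of forward jumps -- forces $V_i := \DeadEnd$, exactly the value dictated by iterating TE9 and TE11. Eliminating the jump variables leaves a finite specification with non-variable \BTA\ right-hand sides; the interpretations of the extracted suffix threads form a solution of it, and $p$ is its $V_1$-component.

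For the converse I would reverse this construction. Given $\set{V = t_V \where V \in \mathcal{V}}$ with each $t_V$ a non-variable \BTA\ term over $\mathcal{V}$, I first linearize: introducing an auxiliary variable for each proper postconditional subterm, and using T1 to rewrite any $\pcc{s}{\Tau}{s'}$ as $\Tau \bapf s$, yields a finite specification all of whose right-hand sides are $\DeadEnd$, $\Stop$, or $\pcc{W'}{a}{W''}$ with $W',W''$ variables, and whose solution agrees with the original on $\mathcal{V}$. To each equation I assign a fixed-length block of primitive instructions -- $\halt$ for $\Stop$, $\fjmp{0}$ for $\DeadEnd$, a test instruction followed by two forward jumps for a postconditional composition -- concatenate the blocks, close the result under repetition, and prefix a jump selecting the block of the variable whose component is $p$. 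A computation with TE1--TE13 shows the extracted thread satisfies these same linear equations, so by AIP it equals $p$.

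The genuinely delicate step is the jump resolution in the forward direction: the naive one-step reading of the \PGA\ term produces variable-on-the-right equations for every jump, and the point is that eliminating them is possible precisely because ${t'}\rep$ is periodic, so the jump graph is finite, and because a jump cycle is semantically inaction. Matching this combinatorial cycle detection to the TE9/TE11 behaviour on infinite jump chains is where the argument must be made carefully; the block layout and offset bookkeeping in the converse direction are routine by comparison.
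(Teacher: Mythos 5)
The paper itself does not prove this proposition; it only cites \cite{BM12b}, so there is no in-text argument to compare against. Your reconstruction follows the standard route and the forward direction is essentially sound: reducing to ${t'}\rep$ via Proposition~\ref{prop-ICF}, reading off one equation per cyclic suffix position from TE1--TE13, and resolving the variable-on-the-right equations produced by jumps by following the finite jump graph --- mapping a cycle to $\DeadEnd$ --- is exactly the right treatment, and it matches the paper's own remark that an infinite chain of forward jumps yields $\extr{t} = \DeadEnd$. (One small bookkeeping point: the distinguished position may itself be a jump position whose variable you eliminate, so you should say explicitly which component of the resulting specification $p$ ends up being; this is trivial to repair.)

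The genuine gap is in the converse direction, in your treatment of $\Tau$. A \BTA\ term may contain postconditional compositions over any $\alpha \in \BActTau$, so a right-hand side $\pcc{s}{\Tau}{s'}$ is admissible under the proposition as stated. Applying T1 rewrites it to $\Tau \bapf s$, but this does not eliminate $\Tau$: after linearization you are left with equations of the form $V = \Tau \bapf W$, and your block encoding provides no primitive instruction realizing them. None can exist, because TE1--TE13 never introduce $\Tau$, so no closed \PGA\ term extracts to a thread that performs $\Tau$; a specification such as $\set{V = \Tau \bapf V}$ therefore cannot be handled by your construction. To close this, you must either restrict the $t_V$ to \BTA\ terms over $\BInstr$ only (which is how the class of regular threads is delimited in \cite{BM12b}), or argue separately that in the models under consideration such $\Tau$-specifications have no solution, so that the right-to-left implication is vacuous for them. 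As written, the converse direction does not go through for specifications involving $\Tau$.
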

More results on the expressiveness of \PGA\ can be found 
in~\cite{BM12b}.

\section{The Case of Instructions for Boolean Registers}
\label{sect-PGAbr}

In this section, we present the instantiation of \PGA\ in which all 
possible instructions to read out and alter Boolean registers are taken 
as basic instructions.

In this instantiation, it is assumed that a fixed but arbitrary set 
$\Foci$ of \emph{foci} has been given.
Foci serve as names of Boolean registers.

The set of basic instructions used in this instantiation consists of the 
following:
\begin{itemize}
\item
for each $f \in \Foci$ and $\funct{p,q}{\Bool}{\Bool}$,
a \emph{basic Boolean register instruction} $f.\mbr{p}{q}$.
\end{itemize}
We write $\BInstrbr$ for this set.

Each basic Boolean register instruction consists of two parts separated 
by a dot.
The part on the left-hand side of the dot plays the role of the name of 
a Boolean register and the part on the right-hand side of the dot plays 
the role of an operation to be carried out on the named Boolean register 
when the instruction is executed.
The intuition is basically that carrying out the operation concerned 
modifies the content of the named Boolean register and produces as a 
reply a Boolean value that depends on the content of the named Boolean 
register.
More precisely, the execution of a basic Boolean register instruction 
$f.\mbr{p}{q}$ has the following effects:
\begin{itemize}
\item
if the content of the Boolean register named $f$ is $b$ when the 
execution of $f.\mbr{p}{q}$ starts, then its content is $q(b)$ when the 
execution of $f.\mbr{p}{q}$ terminates;
\item
if the content of the Boolean register named $f$ is $b$ when the 
execution of $f.\mbr{p}{q}$ starts, then the reply produced on 
termination of the execution of $f.\mbr{p}{q}$ is $p(b)$.
\end{itemize}
The execution of $f.\mbr{p}{q}$ has no effect on the content of Boolean 
registers other than the one named $f$.

$\Bool \to \Bool$, the set of all unary Boolean functions, consists of 
the following four functions:
\begin{itemize}
\item
the function $\FFunc$, satisfying 
$\FFunc(\False) = \False$ and $\FFunc(\True) = \False$;
\item
the function $\TFunc$, satisfying 
$\TFunc(\False) = \True$ and $\TFunc(\True) = \True$;
\item
the function $\IFunc$, satisfying 
$\IFunc(\False) = \False$ and $\IFunc(\True) = \True$;
\item
the function $\CFunc$, satisfying 
$\CFunc(\False) = \True$ and $\CFunc(\True) = \False$.
\end{itemize}
In~\cite{BM13b,BM13a,BM14a,BM13c,BM18a}, we actually used the operations 
$\mbr{\FFunc}{\FFunc}$, $\mbr{\TFunc}{\TFunc}$, and 
$\mbr{\IFunc}{\IFunc}$, but denoted them by 
$\setbr{0}$, $\setbr{1}$ and $\getbr$, respectively.
In~\cite{BM14e}, we actually used, in addition to these operations, the 
operation $\mbr{\CFunc}{\CFunc}$, but denoted it by $\negbr$.
Two examples of peculiar operations are $\mbr{\FFunc}{\IFunc}$ and 
$\mbr{\TFunc}{\IFunc}$.
Carrying out one of these operations on a Boolean register does not 
modify the content of the Boolean register and produces as a reply, 
irrespective of the content of the Boolean register, always the same 
Boolean value.

We write $\PGABTAbr$ for \PGABTA\ with $\BInstr$ instantiated by 
$\BInstrbr$.
Notice that $\PGABTAbr$ is itself parameterized by a set of foci.
 
In the papers just mentioned, $\Foci$ is instantiated by
\begin{ldispl}
\set{\inbr{i} \where i \in \Natpos} \union
\set{\outbri{i} \where i \in \Natpos} \union
\set{\auxbr{i} \where i \in \Natpos}
\end{ldispl}%
if the computation of functions from $\Bool^n$ to $\Bool^m$ with $m > 1$ 
is in order and
\begin{ldispl}
\set{\inbr{i} \where i \in \Natpos} \union
\set{\outbr} \union
\set{\auxbr{i} \where i \in \Natpos}
\end{ldispl}%
if only the computation of functions from $\Bool^n$ to $\Bool$ is in 
order.
These foci are employed as follows:
\begin{itemize}
\item
the foci of the form $\inbr{i}$ serve as names of Boolean registers that 
are used as input registers in instruction sequences;
\item
the foci of the form $\outbri{i}$ and $\outbr$ serve as names of Boolean 
registers that are used as output registers in instruction sequences;
\item
the foci of the form $\auxbr{i}$ serve as names of Boolean registers 
that are used as auxiliary registers in instruction sequences.
\end{itemize}
The above sets of foci are just examples of sets by which $\Foci$ may be 
instantiated.
In the algebraic theories presented in Sections~\ref{sect-BRFA} 
and~\ref{sect-TSI}, $\Foci$ is not instantiated.

\section{Boolean Register Families}
\label{sect-BRFA}

\PGA\ instruction sequences under execution may interact with the named 
Bool\-ean registers from a family of Boolean registers provided by their 
execution environment.
In this section, we introduce an algebraic theory of Boolean register 
families called \BRFA\ (Boolean Register Family Algebra).
Boolean register families are reminiscent of the Boolean register files 
found in the central processing unit of a computer 
(see e.g.~\cite{SGA89a}).

In \BRFA, as in $\PGABTAbr$, it is assumed that a fixed but arbitrary 
set $\Foci$ of foci has been given.

\BRFA\ has one sort: the sort $\BRegFam$ of 
\emph{Boolean register families}.
To build terms of sort $\BRegFam$, \BRFA\ has the following constants 
and operators:
\begin{itemize}
\item
the
\emph{empty Boolean register family} constant 
$\const{\emptysf}{\BRegFam}$;
\item
for each $f \in \Foci$ and $b \in \Bool \union \set{\Div}$, 
the \emph{singleton Boolean register family} constant
$\const{f.\br{b}}{\BRegFam}$;
\item
the binary \emph{Boolean register family composition} operator
$\funct{\ph \sfcomp \ph}{\BRegFam \x \BRegFam}{\BRegFam}$;
\item
for each $F \subseteq \Foci$, 
the unary \emph{encapsulation} operator 
$\funct{\encap{F}}{\BRegFam}{\BRegFam}$.
\end{itemize}
We assume that there are infinitely many variables of sort $\BRegFam$,
including $u,v,w$.
We use infix notation for the Boolean register family composition 
operator.
Taking this notational convention into account, the syntax of closed 
\BRFA\ terms (of sort $\BRegFam$) can be defined in Backus-Naur style as 
follows:
\begin{ldispl}
\nm{CT}_\BRegFam \Sis 
\emptysf \Sor f.\br{b} \Sor 
(\nm{CT}_\BRegFam \sfcomp \nm{CT}_\BRegFam) \Sor
\encap{F}(\nm{CT}_\BRegFam)\;, 
\end{ldispl}%
where $f \in \Foci$, $b \in \Bool \union \set{\Div}$, and 
$F \subseteq \Foci$.

The Boolean register family denoted by $\emptysf$ is the empty Boolean 
register family.
The Boolean register family denoted by a closed term of the form 
$f.\br{b}$, where $b \in \Bool$, consists of one named Boolean register 
only, the Boolean register concerned is an operative Boolean register 
named $f$ whose content is $b$.
The Boolean register family denoted by a closed term of the form 
$f.\br{\Div}$ consists of one named Boolean register only, the Boolean 
register concerned is an inoperative Boolean register named $f$.
The Boolean register family denoted by a closed term of the form
$t \sfcomp t'$ consists of all named Boolean registers that belong to 
either the Boolean register family denoted by $t$ or the Boolean 
register family denoted by $t'$.
In the case where a named Boolean register from the Boolean register 
family denoted by $t$ and a named Boolean register from the Boolean 
register family denoted by $t'$ have the same name, they collapse to 
an inoperative Boolean register with the name concerned.
The Boolean register family denoted by a closed term of the form 
$\encap{F}(t)$ consists of all named Boolean registers with a name not 
in $F$ that belong to the Boolean register family denoted by $t$.

A simple example of a Boolean register family is
\begin{ldispl}
\auxbr{8}.\br{1} \sfcomp \auxbr{7}.\br{1} \sfcomp 
\auxbr{6}.\br{0} \sfcomp \auxbr{5}.\br{0}  
\\ \quad
{} \sfcomp
\auxbr{4}.\br{1} \sfcomp \auxbr{3}.\br{1} \sfcomp 
\auxbr{2}.\br{1} \sfcomp \auxbr{1}.\br{0}\;.
\end{ldispl}%
This Boolean register family can be seen as a storage cell whose content 
is the bit string $01110011$.
Taking the content of such storage cells for binary representations of
natural numbers, the functions on bit strings of length $8$ that model 
addition, subtraction, and multiplication modulo $2^8$ of natural 
numbers less than~$2^8$ can be computed using the instructions for 
Boolean registers introduced in Section~\ref{sect-PGAbr}.

An inoperative Boolean register can be viewed as a Boolean register 
whose content is unavailable.
Carrying out an operation on an inoperative Boolean register is 
impossible.

The axioms of \BRFA\ are given in Table~\ref{axioms-BRFA}.%
\begin{table}[!b]
\caption{Axioms of \BRFA} 
\label{axioms-BRFA}
{
\begin{eqntbl}
\begin{axcol}
u \sfcomp \emptysf = u                                 & \axiom{BRFC1} \\
u \sfcomp v = v \sfcomp u                              & \axiom{BRFC2} \\
(u \sfcomp v) \sfcomp w = u \sfcomp (v \sfcomp w)      & \axiom{BRFC3} \\
f.\br{b} \sfcomp f.\br{b'} = f.\br{\Div}               & \axiom{BRFC4}
\end{axcol}
\quad
\begin{saxcol}
\encap{F}(\emptysf) = \emptysf                       & & \axiom{BRFE1} \\
\encap{F}(f.\br{b}) = \emptysf       & \mif f \in F    & \axiom{BRFE2} \\
\encap{F}(f.\br{b}) = f.\br{b}       & \mif f \notin F & \axiom{BRFE3} \\
\multicolumn{2}{@{}l@{\quad}}
 {\encap{F}(u \sfcomp v) =
  \encap{F}(u) \sfcomp \encap{F}(v)}                   & \axiom{BRFE4}
\end{saxcol}
\end{eqntbl}
}
\end{table}
In this table, $f$ stands for an arbitrary focus from $\Foci$,
$F$ stands for an arbitrary subset of $\Foci$, and 
$b$ and $b'$ stand for arbitrary values from $\Bool \union \set{\Div}$.
These axioms simply formalize the informal explanation given
above.

The following two propositions, proved in~\cite{BM12b}, concern an 
elimination result and a representation result for closed \BRFA\ terms.
\begin{proposition}
\label{prop-elim-encap}
For all closed \BRFA\ terms $t$, there exists a closed \BRFA\ term $t'$ 
in which encapsulation operators do not occur such that $t = t'$ is 
derivable from the axioms of \BRFA.
\end{proposition}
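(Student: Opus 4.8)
The plan is to prove this by structural induction on the closed \BRFA\ term $t$, using only the encapsulation axioms BRFE1--BRFE4 to drive each encapsulation operator inward toward the leaves of the term, where BRFE1--BRFE3 remove it outright. The composition axioms BRFC1--BRFC4 play no role in this particular elimination result.

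The base cases are immediate: if $t$ is $\emptysf$ or of the form $f.\br{b}$, then $t$ already contains no encapsulation operator, so we may take $t' = t$. For the induction step there are two cases. If $t$ is of the form $t_1 \sfcomp t_2$, the induction hypothesis supplies encapsulation-free terms $t_1'$ and $t_2'$ with $t_1 = t_1'$ and $t_2 = t_2'$ derivable from the axioms of \BRFA, and then $t_1' \sfcomp t_2'$ is encapsulation-free and equal to $t$, so we take $t' = t_1' \sfcomp t_2'$.

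The interesting case is $t = \encap{F}(t_0)$. Here I would first invoke the induction hypothesis on $t_0$ to obtain an encapsulation-free term $t_0'$ with $t_0 = t_0'$ derivable, whence $\encap{F}(t_0) = \encap{F}(t_0')$ is derivable as well. It then remains to eliminate the single outermost $\encap{F}$ from $\encap{F}(t_0')$, where $t_0'$ is now built only from $\emptysf$ and constants of the form $f.\br{b}$ using $\sfcomp$. This I would discharge by an auxiliary induction on the structure of $t_0'$: for $t_0' = \emptysf$ apply BRFE1; for $t_0' = f.\br{b}$ apply BRFE2 or BRFE3 according as $f \in F$ or $f \notin F$, obtaining $\emptysf$ or $f.\br{b}$; and for $t_0' = s_1 \sfcomp s_2$ apply BRFE4 to rewrite the term as $\encap{F}(s_1) \sfcomp \encap{F}(s_2)$ and then use the auxiliary induction hypothesis to replace each factor by an encapsulation-free term.

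The only point requiring care is the interaction of nested encapsulation operators: BRFE4 merely distributes an encapsulation over a composition, so applying it directly to a term that still contains inner encapsulations would not by itself reach an encapsulation-free term. This is precisely why the proof must be structured so that the primary induction hypothesis is used first to clear all encapsulation operators from the argument $t_0$, leaving only a single outermost $\encap{F}$ to be discharged by the auxiliary induction. With that ordering fixed, each step is a routine appeal to one of BRFE1--BRFE4 and no genuine obstacle remains.
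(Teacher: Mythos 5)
Your proof is correct, and it is the standard structural-induction argument one would expect for this elimination result; note that the paper itself gives no proof here but defers to~\cite{BM12b}, where the argument is of essentially this form (induction driving each encapsulation operator inward via BRFE4 until BRFE1--BRFE3 discharge it at the leaves). Your care in first clearing inner encapsulations via the primary induction hypothesis before running the auxiliary induction on the now encapsulation-free argument is exactly the right way to keep the auxiliary induction well-founded, and your observation that BRFC1--BRFC4 are not needed is accurate.
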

\begin{proposition}
\label{prop-represent}
For all closed \BRFA\ terms $t$, for all $f \in \Foci$, either 
$t = \encap{\set{f}}(t)$ is derivable from the axioms of \BRFA\ or there 
exists a $b \in \Bool \union \set{\Div}$ such that 
$t = f.\br{b} \sfcomp \encap{\set{f}}(t)$ is derivable fron the axioms 
of \BRFA.
\end{proposition}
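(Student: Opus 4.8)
The plan is to reduce to the encapsulation-free case using the elimination result and then argue by structural induction. By Proposition~\ref{prop-elim-encap}, there is a closed \BRFA\ term $t'$ in which no encapsulation operator occurs with $t = t'$ derivable from the axioms of \BRFA. It suffices to establish the claim for $t'$: since equational derivability is a congruence, from $t = t'$ and $t' = \encap{\set{f}}(t')$ one derives $\encap{\set{f}}(t) = \encap{\set{f}}(t')$ and hence $t = \encap{\set{f}}(t)$, and similarly from $t = t'$ and $t' = f.\br{b} \sfcomp \encap{\set{f}}(t')$ one derives $t = f.\br{b} \sfcomp \encap{\set{f}}(t)$. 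So both alternatives transfer from $t'$ back to $t$.

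It then remains to prove, by induction on the structure of an encapsulation-free closed \BRFA\ term $t'$, that one of the two alternatives is derivable. Such a term is built only from $\emptysf$, the singleton constants $g.\br{b}$, and the composition operator, so there are three cases. If $t' = \emptysf$, then BRFE1 gives $\encap{\set{f}}(\emptysf) = \emptysf$, so the first alternative holds. If $t' = g.\br{b}$ and $g = f$, then BRFE2 gives $\encap{\set{f}}(f.\br{b}) = \emptysf$ and BRFC1 gives $f.\br{b} \sfcomp \emptysf = f.\br{b}$, so the second alternative holds with this $b$; if $g \neq f$, then BRFE3 gives $\encap{\set{f}}(g.\br{b}) = g.\br{b}$, so the first alternative holds.

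For the composition case $t' = t_1 \sfcomp t_2$, I would apply the induction hypothesis to $t_1$ and $t_2$ and combine the outcomes, using BRFE4 to pull $\encap{\set{f}}$ through $\sfcomp$ and BRFC2, BRFC3 to rearrange factors freely. If neither subterm contributes a register named $f$, the first alternative is preserved. If exactly one subterm contributes $f.\br{b}$, then moving that factor to the front and recombining the two encapsulated remainders via BRFE4 yields the second alternative with the same $b$. If both subterms contribute, say $f.\br{b_1}$ and $f.\br{b_2}$, then grouping the two $f$-registers together and applying BRFC4 collapses them to $f.\br{\Div}$, so the second alternative holds with $b = \Div$.

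The only step with genuine content is this last sub-case, where the two named registers with focus $f$ must merge; it is precisely here that the inoperative value $\Div$ is forced, and BRFC4 is exactly the axiom that licenses it. Everything else is bookkeeping with the associativity, commutativity, and unit laws for $\sfcomp$ and with BRFE1--BRFE4. Had I attempted a direct structural induction on all closed \BRFA\ terms instead, the real obstacle would have been the nested-encapsulation case $t' = \encap{F}(t_1)$, since the axioms offer no direct simplification of $\encap{\set{f}} \circ \encap{F}$; prefacing the induction with Proposition~\ref{prop-elim-encap} is what sidesteps this difficulty.
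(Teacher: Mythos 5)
Your proof is correct and complete. Note that the paper itself does not prove this proposition -- it only cites an external reference -- so there is nothing here to compare against line by line; judged on its own, your argument is the natural one and every step checks out. The reduction via Proposition~\ref{prop-elim-encap} is legitimate (that elimination result is independent of the representation result, so there is no circularity), the transfer of both alternatives along $t = t'$ is a routine use of congruence, and the three-way case analysis for encapsulation-free terms uses exactly the right axioms: BRFE1--BRFE3 for the constants, BRFE4 together with BRFC1--BRFC3 for recombining in the composition case, and BRFC4 for the one genuinely non-trivial sub-case where two registers named $f$ collapse to $f.\br{\Div}$. Your closing observation is also apt: a direct induction over all closed \BRFA\ terms would stall on nested encapsulation, since the axioms give no way to push $\encap{\set{f}}$ through $\encap{F}$ directly, and eliminating encapsulation first is precisely what avoids that. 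The only cosmetic point is that in the composition sub-case where exactly one of $t_1$, $t_2$ contributes an $f$-register you should say explicitly that BRFC2 is needed when it is $t_2$ rather than $t_1$; you do mention rearranging factors freely, so this is a matter of presentation rather than a gap.
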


In Section~\ref{sect-comput-bool-fnc}, we will use the notation 
$\Sfcomp{i = 1}{n} t_i$.
For each $i \in \Natpos$, let $t_i$ be a terms of sort $\BRegFam$.
Then, for each $n \in \Natpos$, the term $\Sfcomp{i = 1}{n} t_i$ is 
defined by induction on $n$ as follows: $\Sfcomp{i = 1}{1} t_i = t_1$ 
and $\Sfcomp{i = 1}{n+1} t_i = \Sfcomp{i = 1}{n} t_i \sfcomp t_{n+1}$.

\section{Interaction of Threads with Boolean Registers}
\label{sect-TSI}

A \PGA\ instruction sequence under execution may interact with the named 
Bool\-ean registers from the family of Boolean registers provided by its 
execution environment.
In line with this kind of interaction, a thread may perform a basic 
action basically for the purpose of modifying the content of a named 
Boolean register or receiving a reply value that depends on the content 
of a named Boolean register.
In this section, we introduce related operators.

We combine $\PGABTA(\BInstrbr)$ with \BRFA\ and extend the combination 
with the following operators for interaction of threads with Boolean
registers:
\begin{itemize}
\item
the binary \emph{use} operator
$\funct{\ph \sfuse \ph}{\Thr \x \BRegFam}{\Thr}$;
\item
the binary \emph{apply} operator
$\funct{\ph \sfapply \ph}{\Thr \x \BRegFam}{\BRegFam}$;
\item
the unary \emph{abstraction} operator 
$\funct{\abstr{\Tau}}{\Thr}{\Thr}$;
\end{itemize}
and the axioms given in Tables~\ref{axioms-use-apply}.%
\footnote
{We write $t[t'/x]$ for the result of substituting term $t'$ for 
variable $x$ in term $t$.}
\begin{table}[!t]
\caption{Axioms for the use, apply and abstraction operator} 
\label{axioms-use-apply}
\begin{eqntbl}
\begin{saxcol}
\Stop  \sfuse u = \Stop                                  & \axiom{U1} \\
\DeadEnd \sfuse u = \DeadEnd                             & \axiom{U2} \\
(\Tau \bapf x) \sfuse u = \Tau \bapf (x \sfuse u)        & \axiom{U3} \\
(\pcc{x}{f.\mbr{p}{q}}{y}) \sfuse \encap{\set{f}}(u) =
\pcc{(x \sfuse \encap{\set{f}}(u))}
 {f.\mbr{p}{q}}{(y \sfuse \encap{\set{f}}(u))}           & \axiom{U4} \\
(\pcc{x}{f.\mbr{p}{q}}{y}) \sfuse 
(f.\br{b} \sfcomp \encap{\set{f}}(u)) =
\Tau \bapf (x \sfuse (f.\br{q(b)} \sfcomp \encap{\set{f}}(u)))
                            & \mif p(b) = \True  & \axiom{U5} \\
(\pcc{x}{f.\mbr{p}{q}}{y}) \sfuse 
(f.\br{b} \sfcomp \encap{\set{f}}(u)) =
\Tau \bapf (y \sfuse (f.\br{q(b)} \sfcomp \encap{\set{f}}(u)))
                            & \mif p(b) = \False & \axiom{U6} \\
(\pcc{x}{f.\mbr{p}{q}}{y}) \sfuse 
(f.\br{\Div} \sfcomp \encap{\set{f}}(u)) = \DeadEnd                                          
                                                         & \axiom{U7} \\
\proj{n}(x \sfuse u) = \proj{n}(x) \sfuse u              & \axiom{U8} 
\eqnsep
\Stop  \sfapply u = u                                    & \axiom{A1} \\
\DeadEnd \sfapply u = \emptysf                           & \axiom{A2} \\
(\Tau \bapf x) \sfapply u = \Tau \bapf (x \sfapply u)    & \axiom{A3} \\
(\pcc{x}{f.\mbr{p}{q}}{y}) \sfapply \encap{\set{f}}(u) = \emptysf
                                                         & \axiom{A4} \\
(\pcc{x}{f.\mbr{p}{q}}{y}) \sfapply 
(f.\br{b} \sfcomp \encap{\set{f}}(u)) =
x \sfapply (f.\br{q(b)} \sfcomp \encap{\set{f}}(u))
                            & \mif p(b) = \True  & \axiom{A5} \\
(\pcc{x}{f.\mbr{p}{q}}{y}) \sfapply 
(f.\br{b} \sfcomp \encap{\set{f}}(u)) =
y \sfapply (f.\br{q(b)} \sfcomp \encap{\set{f}}(u))
                            & \mif p(b) = \False & \axiom{A6} \\
(\pcc{x}{f.\mbr{p}{q}}{y}) \sfapply 
(f.\br{\Div} \sfcomp \encap{\set{f}}(u)) = \emptysf      & \axiom{A7} \\
\LAND{k \geq n} t[\proj{k}(x)/z] = s[\proj{k}(y)/z] \Limpl 
t[x/z] = s[y/z]                                          & \axiom{A8}
\eqnsep
\abstr{\Tau}(\Stop) = \Stop                             & \axiom{C1} \\
\abstr{\Tau}(\DeadEnd) = \DeadEnd                       & \axiom{C2} \\
\abstr{\Tau}(\Tau \bapf x) = \abstr{\Tau}(x)            & \axiom{C3} \\
\abstr{\Tau}(\pcc{x}{f.\mbr{p}{q}}{y}) =
\pcc{\abstr{\Tau}(x)}{f.\mbr{p}{q}}{\abstr{\Tau}(y)}    & \axiom{C4} \\
\LAND{n \geq 0}{}
 \abstr{\Tau}(\proj{n}(x)) = \abstr{\Tau}(\proj{n}(y)) \Limpl
\abstr{\Tau}(x) = \abstr{\Tau}(y)                       & \axiom{C5}
\end{saxcol}
\end{eqntbl}
\end{table}
In these tables, $f$ stands for an arbitrary focus from $\Foci$, $p$ and 
$q$ stand for arbitrary Boolean functions from $\Bool \to \Bool$,  
$b$ stands for an arbitrary Boolean value from $\Bool$, 
$n$ stands for an arbitrary natural number from $\Nat$, and
$t$ and $s$ stand for arbitrary terms of sort $\BRegFam$.
We use infix notation for the use and apply operators.
We write \PGABTABRI\ for the combination of \PGABTAbr\ and \BRFA\ 
extended with the use operator, the apply operator, the abstraction 
operator, and the axioms for these operators.
The syntax of closed \PGABTABRI\ terms of sort $\Thr$ and $\BRegFam$ can 
be defined in Backus-Naur style as follows: 
\begin{ldispl}
\nm{CT}''_\Thr \Sis 
\DeadEnd \Sor \Stop \Sor 
(\pcc{\nm{CT}''_\Thr}{\alpha}{\nm{CT}''_\Thr}) \Sor
\extr{\nm{CT}_\InSeq} 
\\ \phantom{\nm{CT}''_\Thr}\quad {} \Sor
(\nm{CT}''_\Thr \sfuse \nm{CT}'_\BRegFam) \Sor
\abstr{\Tau}(\nm{CT}''_\Thr) \;,
\eqnsep
\nm{CT}'_\BRegFam \Sis 
\emptysf \Sor f.\br{b} \Sor 
(\nm{CT}'_\BRegFam \sfcomp \nm{CT}'_\BRegFam) \Sor
\encap{F}(\nm{CT}'_\BRegFam) 
\\ \phantom{\nm{CT}''_\BRegFam}\quad {} \Sor 
(\nm{CT}''_\Thr \sfapply \nm{CT}'_\BRegFam)\;,
\end{ldispl}%
where $\alpha \in \BInstrbr \union \set{\Tau}$, 
$f \in \Foci$, $b \in \Bool \union \set{\Div}$, $F \subseteq \Foci$.
$\nm{CT}_\InSeq$ is defined in Section~\ref{sect-PGA}.

Axioms U1--U7 and A1--A7 formalize the informal explanation of the use 
operator and the apply operator given below and in addition stipulate 
what is the result of apply if an unavailable focus is involved~(A4) and 
what is the result of use and apply if an inoperative Boolean register 
is involved (U7 and A7).
Axioms U8 and A8 allow of reasoning about infinite threads, and 
therefore about the behaviour produced by infinite instruction sequences 
under execution, in the context of use and apply, respectively.

On interaction between a thread and a Boolean register, the thread 
affects the Boolean register and the Boolean register affects the 
thread.
The use operator concerns the effects of Boolean registers on threads 
and the apply operator concerns the effects of threads on Boolean 
registers.
The thread denoted by a closed term of the form $t \sfuse t'$ and the
Boolean register family denoted by a closed term of the form
$t \sfapply t'$ are the thread and Boolean register family, 
respectively, that result from carrying out the operation that is part 
of each basic action performed by the thread denoted by $t$ on the 
Boolean register in the Boolean register family denoted by $t'$ with the 
focus that is part of the basic action as its name.
When the operation that is part of a basic action performed by a thread 
is carried out on a Boolean register, the content of the Boolean 
register is modified according to the operation concerned and the thread 
is affected as follows: the basic action turns into the internal action 
$\Tau$ and the two ways to proceed reduce to one on the basis of the 
reply value produced according to the operation concerned.

With the use operator the internal action $\Tau$ is left as a trace of 
each basic action that has led to carrying out an operation on a Boolean 
register.
The abstraction operator serves to abstract fully from such internal 
activity by concealing $\Tau$.
Axioms C1--C4 formalizes the concealment of $\Tau$.
Axiom C5 allows of reasoning about infinite threads in the context of 
abstraction.

A simple example of use and apply is 
\begin{ldispl}
\extr
 {\Conc{i = 1}{4} 
   (\ntst{\auxbr{i}.\mbr{\IFunc}{\IFunc}} \conc \fjmp{3} \conc
    \auxbr{i}.\mbr{\FFunc}{\FFunc} \conc \halt \conc
    \auxbr{i}.\mbr{\TFunc}{\TFunc})}
\\ \quad {} \sfuse
\auxbr{4}.\br{1} \sfcomp \auxbr{3}.\br{1} \sfcomp 
\auxbr{2}.\br{1} \sfcomp \auxbr{1}.\br{0} 
\\ {} =
\Tau \bapf \Tau \bapf \Tau \bapf \Tau \bapf \Stop\;,
\eqnsep
\extr
 {\Conc{i = 1}{4} 
   (\ntst{\auxbr{i}.\mbr{\IFunc}{\IFunc}} \conc \fjmp{3} \conc
    \auxbr{i}.\mbr{\FFunc}{\FFunc} \conc \halt \conc
    \auxbr{i}.\mbr{\TFunc}{\TFunc})}
\\ \quad {} \sfapply
\auxbr{4}.\br{1} \sfcomp \auxbr{3}.\br{1} \sfcomp 
\auxbr{2}.\br{1} \sfcomp \auxbr{1}.\br{0} 
\\ {} =
\auxbr{4}.\br{1} \sfcomp \auxbr{3}.\br{1} \sfcomp 
\auxbr{2}.\br{0} \sfcomp \auxbr{1}.\br{1}\;.
\end{ldispl}%
In this example, the behaviour of the instructions sequence under 
execution affects the Boolean registers from the Boolean register 
family such that it corresponds to decrement by one on the natural 
number represented by the combined content of the Boolean registers. 
The equations show that, if the combined content of the Boolean 
registers represents $14$,  
(a)~the Boolean registers reduces the behaviour of the instruction 
sequence under execution to termination after four internal actions and 
(b)~the behaviour of the instruction sequence under execution modifies 
the combined content of the Boolean registers to the binary 
representation of $13$.

The following two propositions are about elimination results for 
closed \PGABTABRI\ terms. \sloppy
\begin{proposition}
\label{prop-elim-use}
For all closed \PGABTABRI\ terms $t$ of sort $\Thr$ in which all 
subterms of sort $\InSeq$ are repetition-free, there exists a closed 
\PGABTAbr\ term $t'$ of sort $\Thr$ such that $t = t'$ is derivable from 
the axioms of \PGABTABRI.
\end{proposition}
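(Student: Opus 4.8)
The plan is to prove the statement by structural induction on $t$, establishing simultaneously the companion claim that every closed \PGABTABRI\ term of sort $\BRegFam$ all of whose $\InSeq$ subterms are repetition-free is derivably equal to a closed \BRFA\ term. For the sort $\Thr$ I would in fact prove the stronger statement that such a term is derivably equal to a closed \BTA\ term, i.e.\ a finite thread; since every closed \BTA\ term is a closed \PGABTAbr\ term, this yields the proposition. The strengthening is what makes the induction work, because the left-hand sides of the use, apply and abstraction axioms require their thread argument to have one of the \BTA\ constructors $\DeadEnd$, $\Stop$ or $\pcc{\ph}{\alpha}{\ph}$ at the top.

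As a preliminary, I would record that for every repetition-free closed \PGA\ term $s$ the term $\extr{s}$ is derivably equal to a closed \BTA\ term. This follows by induction on the number of primitive instructions in $s$: after using PGA1 to write $s$ as $u_1 \conc u_2 \conc \ldots \conc u_m$, one of TE1--TE13 rewrites $\extr{s}$ either to $\DeadEnd$ or $\Stop$, or to a \BTA\ constructor applied to thread-extraction terms of strictly shorter instruction sequences; for the test instructions the auxiliary term $\extr{\fjmp{2} \conc X}$ arising from TE4/TE6 is first contracted to the extraction of a shorter sequence by TE9--TE11. This disposes of the base case $t = \extr{s}$; the cases $t = \DeadEnd$ and $t = \Stop$ are trivial, and $t = \pcc{t_1}{\alpha}{t_2}$ follows by applying the induction hypothesis to $t_1$ and $t_2$. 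For sort $\BRegFam$ the constants $\emptysf$ and $f.\br{b}$ are immediate, and the cases $t_a \sfcomp t_b$ and $\encap{F}(t_a)$ follow directly from the induction hypothesis, since $\sfcomp$ and $\encap{F}$ applied to closed \BRFA\ terms again yield closed \BRFA\ terms.

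The three essential cases are $t = t_1 \sfuse t_2$, $t = t_1 \sfapply t_2$ and $t = \abstr{\Tau}(t_1)$. In each case I would first apply the induction hypothesis to the proper subterms, reducing $t_1$ to a closed \BTA\ term $p$ and, where present, $t_2$ to a closed \BRFA\ term $u$, and then eliminate the outermost operator by an inner induction on the structure of $p$. For the use operator the cases $p = \Stop$ and $p = \DeadEnd$ are closed by U1 and U2; a top action $\Tau$ is first brought to the form $\Tau \bapf p'$ by T1 and then consumed by U3; and a top action $f.\mbr{g}{h}$ is handled by invoking Proposition~\ref{prop-represent} for the focus $f$ on the closed \BRFA\ term $u$, which rewrites $u$ either to $\encap{\set{f}}(u)$, matching U4, or to $f.\br{b} \sfcomp \encap{\set{f}}(u)$ with $b \in \Bool \union \set{\Div}$, matching U5 or U6 according to whether $g(b) = \True$ or $g(b) = \False$, and matching U7 when $b = \Div$. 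In every subcase the residual use subterms act on structurally smaller \BTA\ threads and on register-family arguments that are again closed \BRFA\ terms --- namely $\encap{\set{f}}(u)$ or $f.\br{h(b)} \sfcomp \encap{\set{f}}(u)$ --- so the inner induction is well founded and terminates in a closed \BTA\ term. The apply case is analogous, using A1--A7 (again via Proposition~\ref{prop-represent}) to reduce $p \sfapply u$ to a closed \BRFA\ term, and the abstraction case uses C1--C4 together with T1 to reduce $\abstr{\Tau}(p)$ to a closed \BTA\ term.

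I expect the main obstacle to lie in the interaction between the outer structural induction and the inner induction on finite threads, and in the bookkeeping that guarantees that the register-family argument of every residual use or apply subterm is still a closed \BRFA\ term, so that Proposition~\ref{prop-represent} keeps applying at each step of the inner induction. The hypothesis that all $\InSeq$ subterms are repetition-free is exactly what drives this argument: it forces every thread reached through thread extraction to be finite, so the inner inductions bottom out after finitely many steps and the infinitary axioms U8, A8, C5 and AIP are never needed.
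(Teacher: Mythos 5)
Your proposal is correct and follows essentially the same route as the paper: first show by induction that $\extr{s}$ is derivably equal to a closed \BTA\ term for repetition-free $s$, thereby reducing to closed terms of sort $\Thr$ without subterms of sort $\InSeq$, and then eliminate the use, apply and abstraction operators. The paper delegates this second stage to part~(1) of Theorem~3.1 of~\cite{BM12b} (and treats the sort $\BRegFam$ separately in Proposition~\ref{prop-elim-apply}), whereas you carry it out explicitly via a simultaneous induction on both sorts, using Proposition~\ref{prop-represent} to match the register-family argument against the left-hand sides of U4--U7 and A4--A7; this is exactly the argument the citation stands for.
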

\begin{proof}
It is easy to prove by structural induction that, for all closed 
rep\-etition-free \PGABTAbr\ terms $s$ of sort $\InSeq$, there exists a 
closed \PGABTAbr\ term $s'$ of sort $\Thr$ such that $\extr{s} = s'$ is 
derivable from the axioms of \PGABTAbr.
Therefore, it is sufficient to prove the proposition for all closed 
\PGABTABRI\ terms $t$ of sort $\Thr$ in which no subterms of sort 
$\InSeq$ occur.
This is proved similarly to part~(1) of Theorem~3.1 from~\cite{BM12b}. 
\qed
\end{proof}
\begin{proposition}
\label{prop-elim-apply}
For all closed \PGABTABRI\ terms $t$ of sort $\BRegFam$ in which all 
subterms of sort $\InSeq$ are repetition-free, there exists a closed 
\PGABTAbr\ term $t'$ of sort $\BRegFam$ such that $t = t'$ is derivable 
from the axioms of \PGABTABRI.
\end{proposition}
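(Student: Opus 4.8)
The plan is to follow the proof of Proposition~\ref{prop-elim-use} step for step, trading the use operator for the apply operator and trading the representation of threads for the representation of Boolean register families. Exactly as there, the easy structural induction (used in the proof of Proposition~\ref{prop-elim-use}) showing that $\extr{s}$ is derivably equal to a closed \PGABTAbr\ term of sort $\Thr$ for every closed repetition-free \PGABTAbr\ term $s$ of sort $\InSeq$ lets us dispose of all subterms of sort $\InSeq$: since the hypothesis restricts these to be repetition-free, the threads they extract are finite, so after this step every maximal subterm of sort $\Thr$ occurring in $t$ is free of thread extraction. Combining this with Proposition~\ref{prop-elim-use} applied to each such $\Thr$-subterm (which removes the use and abstraction operators as well), we may assume that every subterm of sort $\Thr$ that occurs in $t$, in particular every left argument of an apply operator, is a closed \BTA\ term. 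Because the threads involved are finite, axiom~A8 --- which is present only to support reasoning about infinite threads --- is never needed, and a plain structural induction will suffice.

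First I would run an outer structural induction on the closed term $t$ of sort $\BRegFam$. For $t = \emptysf$ and $t = f.\br{b}$ there is nothing to do, and for $t = t_1 \sfcomp t_2$ and $t = \encap{F}(t_1)$ the induction hypothesis reduces the immediate subterms to closed \BRFA\ terms, which recombine under $\sfcomp$ and $\encap{F}$ into a closed \BRFA\ term. The only genuine case is $t = s \sfapply r$: here the outer induction hypothesis first rewrites the register argument $r$ as a closed \BRFA\ term, and $s$ is a closed \BTA\ term by the reduction just described.

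The core of the argument is then an inner induction on the closed \BTA\ term $s$, proving that for every closed \BRFA\ term $r$ the term $s \sfapply r$ is derivably equal to a closed \BRFA\ term. The cases $s = \Stop$, $s = \DeadEnd$ and $s = \Tau \bapf s'$ are immediate from A1, A2 and A3 (the last followed by the induction hypothesis on $s'$). For $s = \pcc{s_1}{f.\mbr{q_1}{q_2}}{s_2}$ I would invoke Proposition~\ref{prop-represent} for the focus $f$: either $r = \encap{\set{f}}(r)$ is derivable, in which case A4 gives $s \sfapply r = \emptysf$, or there is a $b \in \Bool \union \set{\Div}$ with $r = f.\br{b} \sfcomp \encap{\set{f}}(r)$ derivable. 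In the latter situation A7 gives $\emptysf$ when $b = \Div$, while for $b \in \Bool$ axiom~A5 (when $q_1(b) = \True$) or A6 (when $q_1(b) = \False$) rewrites $s \sfapply r$ as $s_1 \sfapply (f.\br{q_2(b)} \sfcomp \encap{\set{f}}(r))$ or $s_2 \sfapply (f.\br{q_2(b)} \sfcomp \encap{\set{f}}(r))$; since $f.\br{q_2(b)} \sfcomp \encap{\set{f}}(r)$ is again a closed \BRFA\ term, the inner induction hypothesis on the smaller thread $s_1$ or $s_2$ closes the case. Once the apply operator has been eliminated at every occurrence, the remaining sort-$\BRegFam$ term is built solely from \BRFA\ constants and operators, hence is a closed \BRFA\ term --- the required term of sort $\BRegFam$ free of the interaction operators.

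I expect the main obstacle to be the postconditional-composition case, specifically the disciplined use of Proposition~\ref{prop-represent} to force $r$ into one of the shapes $\encap{\set{f}}(u)$ or $f.\br{b} \sfcomp \encap{\set{f}}(u)$ demanded by A4--A7, together with the bookkeeping that the rewritten register family $f.\br{q_2(b)} \sfcomp \encap{\set{f}}(r)$ is still a closed \BRFA\ term to which the induction hypothesis legitimately applies, and the care not to overlook the inoperative subcase $b = \Div$ (axiom~A7). Just as Proposition~\ref{prop-elim-use} parallels part~(1), this argument parallels part~(2) of Theorem~3.1 from~\cite{BM12b}.
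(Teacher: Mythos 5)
Your proposal is correct and takes essentially the same route as the paper: the paper's proof likewise first reduces to terms without subterms of sort $\InSeq$ (via the same observation about thread extraction of repetition-free instruction sequences used for Proposition~\ref{prop-elim-use}) and then defers to part~(2) of Theorem~3.1 of~\cite{BM12b}, whose argument is precisely the induction on threads with the case split via Proposition~\ref{prop-represent} and axioms A1--A7 that you spell out. Your version simply makes the cited induction explicit.
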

\begin{proof}
As in the proof of Proposition~\ref{prop-elim-use}, it is sufficient to 
prove the proposition for all closed \PGABTABRI\ terms $t$ of sort 
$\BRegFam$ in which no subterms of sort $\InSeq$ occur.
This is proved similarly to part~(2) of Theorem~3.1 from~\cite{BM12b}. 
\qed
\end{proof}

\section{Computing Partial Functions from $\Bool^n$ to $\Bool^m$}
\label{sect-comput-bool-fnc}

In this section, we make precise in the setting of the algebraic theory
\PGABTABRI\ what it means that a given instruction sequence computes a 
given partial function from $\Bool^n$ to $\Bool^m$ ($n,m \in \Nat$).

For each $n,m \in \Nat$, we define the following set:
\begin{ldispl}
\Focibr{n}{m} = 
\set{\inbr{i} \where 1 \leq i \leq n} \union 
\set{\auxbr{i} \where i \geq 1} \union 
\set{\outbri{i} \where 1 \leq i \leq m}\;. \;
\end{ldispl}%
We use the instantiation of \PGABTABRI\ in which the set of foci is 
$\UNION{n,m \in \Nat} \Focibr{n}{m}$.
We write $\Focibrall$ for this set and we write \PGABTABRIiao\ for 
\PGABTABRI\ with $\Foci$ instantiated by $\Focibrall$.

Let $n,m \in \Nat$, let $\pfunct{F}{\Bool^n}{\Bool^m}$,%
\footnote
{We write $\pfunct{f}{\Bool^n}{\Bool^m}$ to indicate that $f$ is partial
 function from $\Bool^n$ to $\Bool^m$.}
and let $t$ be a closed repetition-free \PGABTABRIiao\ term of sort 
$\InSeq$ in which only foci from $\Focibr{n}{m}$ occur.
Then $t$ \emph{computes} $F$ if there exists a $k \in \Nat$ such that:
\begin{itemize}
\item
for all $b_1,\ldots,b_n,b'_1,\ldots,b'_m \in \Bool$ with
$F(b_1,\ldots,b_n) = b'_1,\ldots,b'_m$:
\begin{ldispl}
(\extr{t} \sfuse
  ((\Sfcomp{i=1}{n} \inbr{i}.\br{b_i}) \sfcomp
   (\Sfcomp{i=1}{k} \auxbr{i}.\br{\False}))) \sfapply
  (\Sfcomp{i=1}{m} \outbri{i}.\br{\False}) 
\\ \quad {} = 
\Sfcomp{i=1}{m} \outbri{i}.\br{b'_i}\;;
\end{ldispl}%
\item
for all $b_1,\ldots,b_n \in \Bool$ with
$f(b_1,\ldots,b_n)$ undefined:
\begin{ldispl}
(\extr{t} \sfuse
  ((\Sfcomp{i=1}{n} \inbr{i}.\br{b_i}) \sfcomp
   (\Sfcomp{i=1}{k} \auxbr{i}.\br{\False}))) \sfapply
  (\Sfcomp{i=1}{m} \outbri{i}.\br{\False}) 
\\ \quad {} = 
\emptysf\;.
\end{ldispl}%
\end{itemize}

With this definition, we can establish whether an instruction sequence 
of the kind considered in \PGABTABRIiao\ computes a given partial 
function from $\Bool^n$ to $\Bool^m$ ($n,m \in \Nat$) by equational 
reasoning using the axioms of \PGABTABRIiao.

The following proposition tells us that, for each partial function from 
$\Bool^n$ to $\Bool^m$, there exists an instruction sequence of the kind 
considered here that computes it.
\begin{proposition}
\label{prop-comput-bool-fnc}
For all $n,m \in \Nat$, for all $\pfunct{F}{\Bool^n}{\Bool^m}$, there 
exists a closed repetition-free \PGABTABRIiao\ term $t$ in which only 
basic instructions of the forms $f.\mbr{\FFunc}{\FFunc}$, 
$f.\mbr{\TFunc}{\TFunc}$, and $f.\mbr{\IFunc}{\IFunc}$ with 
$f \in \Focibr{n}{m}$ occur such that $t$ computes $F$.
\end{proposition}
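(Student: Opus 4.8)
The plan is to argue by induction on $n$, keeping $m$ fixed, building for each $\pfunct{F}{\Bool^n}{\Bool^m}$ a decision tree that reads the input registers $\inbr{1},\ldots,\inbr{n}$ with non-destructive tests $\ptst{\inbr{i}.\mbr{\IFunc}{\IFunc}}$ and, at each leaf, either writes the prescribed output bits and terminates or becomes inactive. No scratch space is required, so I would take $k = 0$ (all auxiliary registers absent, reading empty compositions as $\emptysf$). To embed inductively constructed sub-sequences inside larger ones, I would strengthen the induction hypothesis with a \emph{self-containment} invariant: the constructed $t$ satisfies $\extr{t \conc s} = \extr{t}$ for every closed term $s$ of sort $\InSeq$, so that appending instructions never alters the extracted thread.

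For the base case $n = 0$ the function $F$ is either defined at the empty tuple with some value $b'_1,\ldots,b'_m$, or undefined. In the first case I would take $t = \outbri{1}.\mbr{c_1}{c_1} \conc \cdots \conc \outbri{m}.\mbr{c_m}{c_m} \conc \halt$ with $c_j = \TFunc$ if $b'_j = \True$ and $c_j = \FFunc$ otherwise; in the second case $t = \fjmp{0}$. Self-containment is immediate from TE12/TE13 and TE7/TE8. That $t$ computes $F$ is a direct reduction: since the output foci do not occur in the use family, the write actions survive the use operator (U4) and are resolved by the apply operator (A5/A6), each setting $\outbri{j}$ to $c_j(\False) = b'_j$, after which $\Stop \sfapply u = u$ (A1) yields $\Sfcomp{i=1}{m}\outbri{i}.\br{b'_i}$; in the undefined case $\DeadEnd \sfuse u = \DeadEnd$ (U2) and $\DeadEnd \sfapply u = \emptysf$ (A2) give $\emptysf$.

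For the inductive step I would branch on the last input. Given $\pfunct{F}{\Bool^{n+1}}{\Bool^m}$, put $F_c(b_1,\ldots,b_n) = F(b_1,\ldots,b_n,c)$ for $c \in \Bool$, take the terms $t_0,t_1$ furnished by the hypothesis (using only foci from $\Focibr{n}{m} \subseteq \Focibr{n+1}{m}$), and, writing $\ell_0$ for the number of primitive instructions of $t_0$, set $t = \ptst{\inbr{n+1}.\mbr{\IFunc}{\IFunc}} \conc \fjmp{\ell_0+1} \conc t_0 \conc t_1$. The central computation is $\extr{t} = \pcc{\extr{t_1}}{\inbr{n+1}.\mbr{\IFunc}{\IFunc}}{\extr{t_0}}$: TE4 splits on the test; the positive branch rewrites $\extr{\fjmp{\ell_0+1} \conc t_0 \conc t_1}$ to $\extr{t_1}$ by peeling the jump over the $\ell_0$ instructions of $t_0$ with TE11 and finishing with TE9, while the negative branch rewrites $\extr{\fjmp{2} \conc \fjmp{\ell_0+1} \conc t_0 \conc t_1}$ to $\extr{t_0 \conc t_1}$ and then, by self-containment of $t_0$, to $\extr{t_0}$. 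Running the same jump bookkeeping on $t \conc s$ re-establishes self-containment of $t$.

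It then remains to verify that $t$ computes $F$. Fixing an input $b_1,\ldots,b_{n+1}$ and writing $R$ for the associated use family, the use operator resolves the leading test by U5 or U6 according to whether $\IFunc(b_{n+1}) = b_{n+1}$ is $\True$ or $\False$, leaving the register content unchanged and selecting $\extr{t_{b_{n+1}}} \sfuse R$ after one internal step; since $t_{b_{n+1}}$ never addresses $\inbr{n+1}$, the hypothesis applies to the remainder and delivers the value $F_{b_{n+1}}(b_1,\ldots,b_n) = F(b_1,\ldots,b_{n+1})$ (or $\emptysf$ when undefined). I expect the main obstacle to be exactly this interface between two layers of bookkeeping: fixing the single offset $\ell_0+1$ correctly and, above all, phrasing and propagating the self-containment invariant so that the embedded $t_0$ and $t_1$ keep their extracted threads intact. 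The leaf reductions, together with carrying the internal actions left by the resolved tests through the apply operator (A3), are then routine applications of U1--U8 and A1--A8.
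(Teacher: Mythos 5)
Your construction is sound in outline, and it is worth noting that it is genuinely more self-contained than what the paper does: the paper's entire proof is a citation of Theorem~5.6 of the earlier work \cite{BM12b} (for total $F$) plus the remark that the same proof handles partial $F$. You instead build the witness explicitly as a decision tree over the input registers, with the self-containment invariant $\extr{t \conc s} = \extr{t}$ doing the work that lets sub-trees be concatenated, and the jump arithmetic ($\fjmp{\ell_0+1}$ over the $\ell_0$ instructions of $t_0$, discharged by TE11/TE9) is correct. The base case and the thread-extraction computation in the inductive step check out against TE1--TE13, and the leaf reductions against U1--U8 and A1--A8 are as routine as you say.

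There is, however, one step that does not go through as written: the appeal to the induction hypothesis in the inductive step. The hypothesis that ``$t_{b_{n+1}}$ computes $F_{b_{n+1}}$'' is, by the definition in Section~\ref{sect-comput-bool-fnc}, an equation whose use operand is exactly
$(\Sfcomp{i=1}{n} \inbr{i}.\br{b_i}) \sfcomp (\Sfcomp{i=1}{k} \auxbr{i}.\br{\False})$,
whereas after resolving the leading test you are holding
$\extr{t_{b_{n+1}}} \sfuse ((\Sfcomp{i=1}{n+1} \inbr{i}.\br{b_i}) \sfcomp \cdots)$,
a family containing the extra register $\inbr{n+1}.\br{b_{n+1}}$ (and, at deeper recursion levels, several such extras). ``The hypothesis applies to the remainder'' is not something you can invoke; you must either (a) strengthen the induction hypothesis to assert the defining equations for every use operand of the form $(\Sfcomp{i=1}{n} \inbr{i}.\br{b_i}) \sfcomp (\Sfcomp{i=1}{k} \auxbr{i}.\br{\False}) \sfcomp v$ with $v$ a family none of whose foci occurs in $t_{b_{n+1}}$, or (b) prove separately, by structural induction on closed finite thread terms using U4--U7 and BRFE1--BRFE4, that $x \sfuse (f.\br{b} \sfcomp u) = x \sfuse u$ whenever the focus $f$ does not occur in $x$. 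Either repair is routine (everything here is finite, so no appeal to U8/A8 is needed), but one of them must be made explicit. A smaller point: with $n=0$ or $k=0$ the notation $\Sfcomp{i=1}{0}$ is undefined in the paper (it is only defined for positive upper bounds), so your convention that empty compositions denote $\emptysf$ should be stated as an explicit stipulation rather than a ``reading''.
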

\begin{proof}
As an immediate corollary of the proof of Theorem~5.6 in~\cite{BM12b} we
have the following: for all $n,m \in \Nat$, for all 
$\funct{F}{\Bool^n}{\Bool^m}$, there exists a closed repetition-free 
\PGABTABRIiao\ term $t$ in which only basic instructions of the forms 
$f.\mbr{\FFunc}{\FFunc}$, $f.\mbr{\TFunc}{\TFunc}$, and 
$f.\mbr{\IFunc}{\IFunc}$ with $f \in \Focibr{n}{m}$ occur such that 
$t$ computes $F$.
It is easy to see from the same proof that this corollary generalizes 
from total functions to partial functions.
\qed
\end{proof}

The following proposition tells us that an instruction sequence in which 
not only basic instructions of the forms $f.\mbr{\FFunc}{\FFunc}$, 
$f.\mbr{\TFunc}{\TFunc}$, and $f.\mbr{\IFunc}{\IFunc}$ occur can be 
transformed primitive instruction by primitive instruction to an at most 
linearly longer instruction sequence computing the same function in 
which only basic instructions of the forms $f.\mbr{\FFunc}{\FFunc}$, 
$f.\mbr{\TFunc}{\TFunc}$, and $f.\mbr{\IFunc}{\IFunc}$ occur.

The \emph{functional equivalence} relation $\feqv$ on the set of all 
closed repetition-free \PGABTABRIiao\ terms of sort $\InSeq$ is defined 
by $t \feqv t'$ iff there exist $n,m \in \Nat$ such that:
\begin{itemize}
\item
$t$ and $t'$ are terms in which only foci from $\Focibr{n}{m}$ occur;
\item
there exists a $\pfunct{F}{\Bool^n}{\Bool^m}$ such that $t$ computes $F$
and $t'$ com\-putes $F$.
\end{itemize}

\begin{proposition}
\label{prop-restr-instr-set}
There exists a unary function $\phi$ on the set of all closed 
repetition-free \PGABTABRIiao\ terms of sort $\InSeq$ such that:
\begin{itemize}
\item
$\phi$ is the homomorphic extension of a function $\phi'$ from the set 
of all \PGABTABRIiao\ constants of sort $\InSeq$ to the set of all 
closed repetition-free \PGABTABRIiao\ terms of sort $\InSeq$;
\item
for all closed repetition-free \PGABTABRIiao\ terms $t$ of sort 
$\InSeq$:
\begin{itemize}
\item
$t \feqv \phi(t)$;
\item
$\phi(t)$ is a term in which only basic instructions of the forms
$f.\mbr{\FFunc}{\FFunc}$, $f.\mbr{\TFunc}{\TFunc}$, and 
$f.\mbr{\IFunc}{\IFunc}$ occur;
\item
$\phi(t)$ is at most $3 \cdot p$ primitive instructions longer than $t$, 
where $p$ the number of occurrences of basic instructions in $t$ that 
are not of the form $f.\mbr{\FFunc}{\FFunc}$, $f.\mbr{\TFunc}{\TFunc}$ 
or $f.\mbr{\IFunc}{\IFunc}$. 
\end{itemize}
\end{itemize}
\end{proposition}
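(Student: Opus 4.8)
The plan is to define $\phi$ as the homomorphic extension (with respect to $\conc$) of a map $\phi'$ given by a case analysis on the primitive instruction, and then to verify the three required properties in turn. On $\halt$, on every forward jump $\fjmp{l}$, and on the three admissible basic instructions $f.\mbr{\FFunc}{\FFunc}$, $f.\mbr{\TFunc}{\TFunc}$, $f.\mbr{\IFunc}{\IFunc}$ (in each of the plain, positive-test and negative-test modes) I would let $\phi'$ be the identity, so that these contribute nothing either to the length or to $p$. The work is therefore concentrated on the thirteen remaining basic instructions $f.\mbr{p}{q}$; for each of them, and for each mode, I would exhibit a short block over $f.\getbr$, $f.\setbr{0}$, $f.\setbr{1}$ that reproduces both its effect on the content of the register named $f$ and its reply-driven continuation behaviour, and take that block as the value of $\phi'$. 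Since $\phi'$ then sends every constant either to itself or to such a block, $\phi$ is by construction the homomorphic extension of a map from the constants of sort $\InSeq$ to closed repetition-free terms, which is the first bullet.

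For the blocks I would use two reductions that keep most cases at length one. First, the reply function only selects which branch a test takes, so a test whose reply function is $\CFunc$ is replaced by the opposite-polarity test of the corresponding $\IFunc$-reply operation, while a test whose reply function is the constant $\FFunc$ or $\TFunc$ is an unconditional ``proceed'' (hence a plain instruction) or an unconditional ``skip''. Second, the content change is realised by a single admissible operation exactly when $q \in \set{\FFunc,\TFunc,\IFunc}$ (by $f.\setbr{0}$, $f.\setbr{1}$, $f.\getbr$). The only content change unavailable from a single operation is the complement $q = \CFunc$, which I would realise by $\ptst{f.\getbr} \conc \ptst{f.\setbr{0}} \conc f.\setbr{1}$: one reads the bit; if it was $1$ the test on $f.\setbr{0}$ overwrites it with $0$ and, since $\setbr{0}$ replies $\False$, skips the trailing instruction; if it was $0$ the first test skips and $f.\setbr{1}$ fires. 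In both cases the register ends up complemented. Terminating the block with $f.\setbr{1}$ (plain case) or with $\ntst{f.\setbr{1}}$ (test case, so that the ``skip'' branch is produced) gives, after a case analysis over the thirteen operations and three modes, a block of at most four primitive instructions in every case, i.e.\ at most three more than the instruction it replaces; summing over occurrences yields the length bound $3 \cdot p$.

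To establish $t \feqv \phi(t)$ I would fix $n,m$ from the foci occurring in $t$ and show that $t$ and $\phi(t)$ compute the same partial function $F$. The heart of this is a local claim: each replacement block has, on the register named $f$, exactly the same input-content-to-(reply, output-content) behaviour as the instruction it replaces. Using axioms U1--U7 and A1--A7 to evaluate one interaction step and C1--C4 to discard the $\Tau$ that marks it, the block and the original instruction transform the surrounding Boolean register family identically and hand control to the matching continuation. Since $t$ is repetition-free, $\extr{t}$ is a finite thread, so lifting this from single instructions to the whole instruction sequence is a plain structural induction and no appeal to AIP, U8 or A8 is needed; one concludes that $\extr{t}$ and $\extr{\phi(t)}$, applied to the standard input/auxiliary/output families, yield the same output family, whence $t \feqv \phi(t)$.

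The main obstacle is the interaction of this instruction-local replacement with control flow: replacing a basic instruction by a longer block shifts the positions of all later instructions, so the destination of every $\fjmp{l}$ and the implicit ``skip one'' of every test instruction must still land on the image of the originally intended instruction, and a test's skip must now pass over the \emph{image} of one source instruction, which may itself be a multi-instruction block. A replacement literally preserving numeric offsets is therefore incorrect. The way I would secure correctness is to design every block with a single entry and with its ``proceed'' and ``skip'' exits aligned with the two continuations of the instruction it replaces, and to reason with symbolic jump and skip targets (positions in the original sequence), recomputing the numeric offsets afterwards so that $\phi'(\fjmp{l})$ is again a single jump and neither the length bound nor $p$ is affected. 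I expect the delicate part of the write-up to be exactly this bookkeeping --- checking that the retargeting is well defined and that each test's skip traverses precisely the image of one source instruction --- rather than the register-level calculations, which are routine given the block designs above; if preferred, this control-flow accounting can instead be inherited wholesale by reducing to the construction behind Theorem~5.6 of~\cite{BM12b}.
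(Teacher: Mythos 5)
Your register-level block designs are fine, and your route --- an explicit per-instruction construction --- is genuinely different from the paper's proof, which contains no construction at all: it merely observes that the definition of $\feqv$ given here coincides with the instance of the one in~\cite{BM15a} and then invokes Proposition~2 and Theorem~3 of that paper. The difficulty is that the one issue you defer, the control-flow bookkeeping, is not routine but is the crux, and the resolution you propose does not work. If jump offsets are ``recomputed afterwards'' so that each jump lands on the image of its originally intended target, then the image of $\fjmp{l}$ depends on which instructions lie between the jump and its target, so $\phi'(\fjmp{l})$ is not determined by the constant $\fjmp{l}$ alone and $\phi$ is no longer the homomorphic extension of a map on constants --- contradicting the first bullet of the proposition. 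If instead you keep $\phi'(\fjmp{l}) = \fjmp{l}$, the computed function changes. Concretely, take $t = \fjmp{3} \conc \outbri{1}.\mbr{\CFunc}{\CFunc} \conc \outbri{1}.\mbr{\TFunc}{\TFunc} \conc \halt$: here $t$ computes the constant function $0$ from $\Bool^0$ to $\Bool^1$, but after replacing $\outbri{1}.\mbr{\CFunc}{\CFunc}$ by your block $\ptst{\outbri{1}.\getbr} \conc \ptst{\outbri{1}.\setbr{0}} \conc \outbri{1}.\setbr{1}$ the jump lands on that block's last instruction, and $\phi(t)$ computes the constant function~$1$. Moreover, this cannot be repaired by cleverer block design alone: applying the length bound to one-instruction terms forces $\phi'$ to be length-preserving on jumps, on $\halt$, and on the admissible basic instructions, so a long jump passing over a replaced instruction can always be made to land on the necessarily single-instruction image of an admissible instruction that ought to have been skipped.

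In short, the control-flow accounting is exactly the content that must be imported rather than improvised, and your fallback citation points at the wrong result: Theorem~5.6 of~\cite{BM12b} underlies Proposition~\ref{prop-comput-bool-fnc} (existence of \emph{some} instruction sequence computing a given function) and says nothing about transforming a \emph{given} instruction sequence constant by constant. What you need is precisely Proposition~2 and Theorem~3 of~\cite{BM15a}, which is what the paper cites; whatever device is used there to reconcile the per-constant definition of $\phi$ with the retargeting of jumps and of the implicit skips of test instructions is the part of the argument your proposal is missing, and until that is supplied the second bullet ($t \feqv \phi(t)$) is not established.
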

\begin{proof}
It follows immediately from part~(3) of Proposition~3.1 in~\cite{BM12b} 
that the definition of $\feqv$ given above is a reformulation of the 
instance of the definition of $\feqv$ given in~\cite{BM15a} where the 
set $\Foci$ of foci is instantiated by $\Focibrall$.
This makes the current proposition a corollary of Proposition~2 and 
Theorem~3 in~\cite{BM15a}.
\qed
\end{proof}

The view put forward in this section on what it means in the setting of 
\PGABTABRI\ that a given instruction sequence computes a given partial 
function from $\Bool^n$ to $\Bool^m$ ($n,m \in \Nat$) is the view taken 
in the work on complexity of computational problems, efficiency of 
algorithms, and algorithmic equivalence of programs presented 
in~\cite{BM13b,BM13a,BM14a,BM14e,BM13c,BM18a}.
We remark that Boolean registers cannot only be used to compute partial 
functions from $\Bool^n$ to $\Bool^m$.
For example, it is shown in~\cite{BM08h} that jump instruction are not
necessary if use can be made of Boolean registers.

\section{Uses for the Theory}
\label{sect-work-done}

In this section, we give a survey of uses for \PGABTABRI.

It is often said that a program is an instruction sequence and, if this 
characterization has any value, it must be the case that it is somehow 
easier to understand the concept of an instruction sequence than to 
understand the concept of a program. 
The first objective of the work on instruction sequences that started 
with~\cite{BL02a}, and of which an enumeration is available 
at~\cite{SiteIS}, is to understand the concept of a program. 
The basis of all this work is the parameterized algebraic theory 
\PGABTA\ extended to deal with the interaction between instruction 
sequences under execution and components of their execution environment.
The body of theory developed through this work is such that its use as a 
conceptual preparation for programming is practically feasible.

The notion of an instruction sequence appears in the work in question as 
a mathematical abstraction for which the rationale is based on the 
objective mentioned above. 
In this capacity, instruction sequences constitute a primary field of 
investigation in programming comparable to propositions in logic and 
rational numbers in arithmetic. 
The structure of the mathematical abstraction at issue has been 
determined in advance with the hope of applying it in diverse 
circumstances where in each case the fit may be less than perfect. 

Until now, the work in question has, among other things, yielded an 
approach to computational complexity where program size is used as 
complexity measure, a contribution to the conceptual analysis of the 
notion of an algorithm, and new insights into such diverse issues as the 
halting problem, program parallelization for the purpose of explicit 
multi-threading and virus detection.

The work done in the setting of \PGABTABRI, which is just an 
instantiation of the above-mentioned basis, includes:
\begin{itemize}
\item
Work yielding an approach to computational complexity in which 
algorithmic problems are viewed as families of functions that consist of 
a function from $\Bool^n$ to $\Bool$ for each natural number n and the 
complexity of such problems is assessed in terms of the length of  
instruction sequences that compute the members of these families. 
Several kinds of non-uniform complexity classes have been introduced.
One kind includes a counterpart of the well-known complexity class 
P/poly and another kind includes a counterpart of the well-known 
complexity class NP/poly (see~\cite{BM13a}).
\item
Work contributing to the conceptual analysis of the notion of an 
algorithm.
Two equivalence relations on instruction sequences have been defined,
an algorithmic equivalence relation and a computational equivalence 
relation.
The algorithmic equivalence relation captures to a reasonable degree 
the intuitive notion that two instruction sequences express the same 
algorithm.
Any equivalence relation that captures the notion that two instruction 
sequences express the same algorithm to a higher degree must be finer 
than the computational equivalence relation (see~\cite{BM14a}).
\item
Work showing that, in the case of computing the parity function on bit
strings of length $n$, for each natural number $n$, shorter instruction 
sequences are possible with the use of an auxiliary Boolean register 
than without the use of auxiliary Boolean registers.
This result supports, in a setting where programs are instruction 
sequences acting on Boolean registers, a basic intuition behind the 
storage of auxiliary data, namely the intuition that this makes possible 
a reduction of the size of a program (see~\cite{BM14e}).
\item
Work providing mathematically precise alternatives to the natural 
language and pseudo code descriptions of the long multiplication 
algorithm and the Karatsuba multiplication algorithm.
One established result is that the instruction sequence expressing the 
latter algorithm is shorter than the instruction sequence expressing the 
former algorithm only if the length of the bit strings involved is 
greater than 256. 
Another result is that in a setting with backward jump instructions the 
long multiplication algorithm can be expressed by an instruction 
sequence that is shorter than both these instruction sequences if the 
length of the bit strings involved is greater than 2 (see~\cite{BM13c}).
\item
Work showing that the problem of deciding whether an instruction 
sequence computes the function modeling the non-zeroness test on natural 
numbers less than $2^n$ with respect to their binary representation by 
bit strings of length $n$, for natural number $n$, can only be 
efficiently solved under the restriction that the length of the 
instruction sequence is close to the length of the shortest possible 
instruction sequences that compute this function (see~\cite{BM18a}).
\end{itemize}

\section{Concluding Remarks}
\label{sect-concl}

We have presented the theory underlying a considerable part of the work 
done so far in a line of research in which issues relating to a wide 
variety of subjects from computer science are rigorously investigated 
thinking in terms of instruction sequences.
The distinguishing feature of this presentation is that it is less 
involved than previous presentations. 
Sections~\ref{sect-PGA}, \ref{sect-BTA}, and~\ref{sect-TE-BC} concern 
the part of the presented theory that is relevant to all the work done 
so far in the line of research referred~to.

The restriction to instructions that operate on Boolean registers is a 
classical restrictions in computer science.
Other such classical restrictions are the restriction to instructions 
that operate on natural number registers in register machines and the 
restriction to instructions that operate on Turing tapes in Turing 
machines (see e.g.~\cite{HMU07a}).
Adaptation of Sections~\ref{sect-PGAbr}, \ref{sect-BRFA}, 
and~\ref{sect-TSI} to these restrictions is rather straightforward
(cf.~\cite{BM09k}).

Notice that we have fixed in Section~\ref{sect-comput-bool-fnc}, for 
each use of a Boolean register that must be distinguished to make 
precise what it means that a given instruction sequence computes a given 
partial function from $\Bool^n$ to $\Bool^m$ ($n,m \in \Nat$), the focus 
by which the Boolean register for that use is named.
Because of this and the required generality, the possibility that the 
same Boolean register is used as both input register and output register
is excluded.
Exclusion of possibilities like this can be circumvented by abandoning 
the fixed assignment of foci to register uses and defining ``$t$ 
computes $f$'' relative to an assignment of foci to register uses.
This approach complicates matters, but seems indispensable to find
conclusive answers to open questions like ``what are the shortest 
instruction sequences that compute the function on bit strings of length 
$n$ that models addition modulo $2^n$ on natural numbers less than 
$2^n$, for $n \in \Natpos$?''.

The instruction sequences with instructions for Boolean registers
considered in this paper constitute essentially a programming language 
in which all variables are Boolean variables.
Such programming languages are actually used in toolkits for software 
model checking that make use of the abstract interpretation technique 
known as predicate abstraction (see e.g.~\cite{BR00b,BR01a}).

\subsection*{Acknowledgements}
We thank an anonymous referee for carefully reading a preliminary 
version of this paper, for pointing out some slips made in it, and for 
posing questions that have led to improvements of the presentation.

\bibliographystyle{splncs03}
\bibliography{IS}

\end{document}